\begin{document}
	\title{Two-Way Coding in Control Systems Under Injection Attacks: From Attack Detection to Attack Correction}

\author{Song Fang}
\affiliation{%
  \institution{School of Electrical Engineering and Computer Science, KTH Royal Institute of Technology}
  \city{Stockholm}
  \country{Sweden}
}
\email{sonf@kth.se}

\author{Karl Henrik Johansson}
\affiliation{%
	\institution{School of Electrical Engineering and Computer Science, KTH Royal Institute of Technology}
	\city{Stockholm}
	\country{Sweden}
}
\email{kallej@kth.se}

\author{Mikael Skoglund}
\affiliation{%
	\institution{School of Electrical Engineering and Computer Science, KTH Royal Institute of Technology}
	\city{Stockholm}
	\country{Sweden}
}
\email{skoglund@kth.se}

\author{Henrik Sandberg}
\affiliation{%
	\institution{School of Electrical Engineering and Computer Science, KTH Royal Institute of Technology}
	\city{Stockholm}
	\country{Sweden}
}
\email{hsan@kth.se}

\author{Hideaki Ishii}
\affiliation{%
	\institution{Department of Computer Science, Tokyo Institute of Technology}
	\city{Yokohama}
	\country{Japan}
}
\email{ishii@c.titech.ac.jp}

%
%
%
%
%


\begin{abstract}
In this paper, we introduce the method of two-way coding, a concept originating in communication theory characterizing coding schemes for two-way channels, into (networked) feedback control systems under injection attacks. We first show that the presence of two-way coding can distort the perspective of the attacker on the control system. In general, the distorted viewpoint on the attacker side as a consequence of two-way coding will facilitate detecting the attacks, or restricting what the attacker can do, or even correcting the attack effect. In the particular case of zero-dynamics attacks, if the attacks are to be designed according to the original plant, then they will be easily detected; while if the attacks are designed with respect to the equivalent plant as viewed by the attacker, then under the additional assumption that the plant is stabilizable by static output feedback, the attack effect may be corrected in steady state.  
\end{abstract}

%
%
%

\keywords{Cyber-physical system, networked control system, cyber-physical security, two-way channel, two-way coding, zero-dynamics attack}

\maketitle

%


\section{Introduction}

The concept of two-way communication channels dates back to Shannon \cite{shannon1961two}. As its name indicates, in two-way channels, signals are transmitted simultaneously in both directions between the two terminals of communication. Accordingly, coding for two-way channels should make use of the information contained in the data transmitted in both directions; in other words, the coding schemes are also two-way, and thus are referred to as two-way coding \cite{van1977survey, meeuwissen1998information, chaaban2015multi}.

Inherently, the communication channels in networked feedback control systems are two-way channels, with the controller side and the plant side being viewed as the two terminals of communication, respectively. Nevertheless, approaches based on two-way coding for the two-way channels in networked feedback systems are rarely seen in the literature. One exception is the so-called scattering transformation utilized in the tele-operation of robotics \cite{anderson1989bilateral, niemeyer1991stable, hokayem2006bilateral, nuno2011passivity, hirche2012human, hatanaka2015passivity}; in a broad sense, scattering transformation can be viewed as a special class of two-way coding to resolve the issue of two-way time delays, the most essential characterization and the main issue of the two-way channels modeled on the input-output level in the problem of tele-operation. Other related applications of the scattering transformation include \cite{kimura1996chain, kailath2000linear, gu2011two}.

Particularly in the cyber-physical security problems (see, e.g., \cite{poovendran2012special, johansson2014guest, sandberg2015cyberphysical, teixeira2015secure, zhu2015game, amin2015game, smith2015covert, mo2015physical, pasqualetti2015control, cheng2017guest, giraldo2018survey} and the references therein) of networked control systems, to the best of our knowledge, only one-way coding has been employed. The authors of \cite{xu2015secure} introduced (one-way) encryption matrices into control systems to achieve confidentiality and integrity. In \cite{miao2017coding}, the authors considered a method of coding (using one-way coding matrices) the sensor outputs in order to detect stealthy false data injection attacks in cyber-physical systems. 
Modulation matrices, which are one-way, were inserted into cyber-physical systems in \cite{hoehn2016detection} to detect covert attacks and zero-dynamics attacks. 
Dynamic one-way coding was applied to detect and isolate routing attacks \cite{ferrari2017detection1} and replay attacks \cite{ferrari2017detection2}. 
For remote state estimation in the presence of eavesdroppers, the so-called state-secrecy codes were introduced \cite{tsiamis2017state}, which are also inherently one-way coding schemes.
On the other hand, as will be discussed in Section~4 of this paper, one-way coding has its inherent limitations; for instance, one-way coding in general cannot eliminate the unstable poles nor nonminimum-phase zeros of the plant nor the controller, which are most critical issues in the defense against, e.g., zero-dynamics attacks \cite{teixeira2015secure}. 

In this paper, we investigate how two-way coding can play an important role in protecting the security of feedback control systems under injection attacks. We first introduce a series of special classes of two-way coding, including the two-way stretching, shearing, and rotation matrices, as well as the scattering transformation. We then examine what changes the presence of two-way coding will bring to the feedback control system. On one hand, it is seen that on the controller and reference side, the plant behaves exactly as if two-way coding does not exist; as such, the controller may be designed regardless of two-way coding. On the other, two-way coding will distort the attacker's perspective of the signals and systems, i.e., the components of the feedback loop, giving him/her a ``transformed" view of the control system, and making the behaviors of the plant, controller, and reference all seemingly different from the those of the original system without two-way coding. 


More specifically, we examine how the presence of two-way coding can play a critical role in the defense against injection attacks. In general, the distorted perspective on the attacker side as a result of two-way coding will enable detecting the attacks or restricting what the attacker can do or even correcting the attack effect, depending on the attacker's knowledge of the system. As a matter of fact, two-way coding can make the zeros and/or poles of the equivalent plant as viewed by the attacker all different from those of the original plant, and under some additional assumptions (i.e., the plant is stabilizable by static output feedback), the equivalent plant may even be made stable and/or minimum-phase. In the particular case of zero-dynamics attacks, it is then implicated that the attacks will be detected if designed according to the original plant, while the attack effect may be corrected in steady state if the attacks are to be designed with respect to the equivalent plant.

The remainder of the paper is organized as follows. Section~2 is devoted to two-way coding. In Section~3, we introduce two-way coding into linear time-invariant (LTI) feedback control systems under injection attacks, and show how its presence can distort the perspective of the attacker. Section~4 analyzes the role two-way coding can play in the defense against injection attacks, in particular, zero-dynamics attacks. 
Concluding remarks are given in Section~5.

\section{Two-Way Coding}

Consider the single-input single-output (SISO) system depicted in Fig.~\ref{figure1}. Herein, $K$ denotes the controller while $P$ denotes the plant. The reference signal is $r \left( t \right) \in \mathbb{R}$ and the plant output is $\overline{y} \left( t \right) \in \mathbb{R}$. In addition, let $u \left( t \right)$, $\overline{u} \left( t \right)$, $y \left( t \right)$, $q \left( t \right)$, $\overline{q} \left( t \right)$, $v \left( t \right)$, $\overline{v} \left( t \right) \in \mathbb{R}$. 

\begin{figure}
	\vspace*{-0mm}
	\begin{center}
		\includegraphics [width=0.5\textwidth]{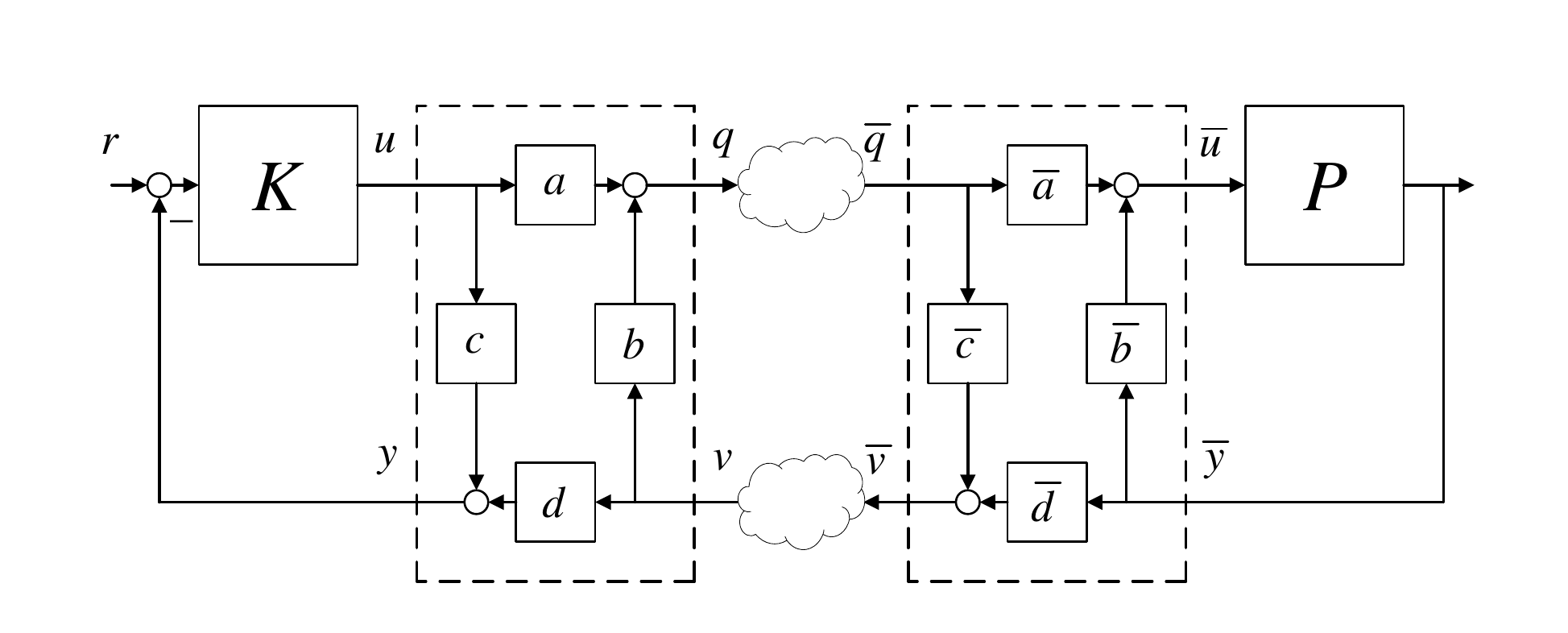}
		\vspace*{-9mm}
		\caption{A networked feedback system with two-way coding.}
		\label{figure1}
	\end{center}
	\vspace*{-3mm}
\end{figure}

\begin{definition}
	The (static) \textbf{two-way coding} is defined as
	\begin{flalign}
	\left[
	\begin{array}{c}
	q \left( t \right)\\
	y \left( t \right)\\
	\end{array}
	\right]
	&=M
	\left[
	\begin{array}{c}
	u \left( t \right)\\
	v \left( t \right)\\
	\end{array}
	\right],
	\end{flalign}
	where 
	\begin{flalign}
	M
	=\left[
	\begin{array}{cc}
	a & b \\
	c & d \\
	\end{array}\right].
	\end{flalign}
	Herein, $a, b, c, d \in \mathbb{R}$ are chosen such that 
	\begin{flalign} \label{condition1}
	ad \neq 0,~ ad - bc \neq 0. 
	\end{flalign}
	Strictly speaking, it should be further assumed that $\left| ad - bc \right| < \infty$.
\end{definition}

\vspace*{0mm}

Herein, two-way coding (operating in a feedback loop) represents a two-way transformation that takes in the signal in the forward path and the signal in the feedback path, and outputs a new signal to the forward path and a second new signal that passes on in the feedback path. In comparison, Fig.~\ref{figureoneway} depicts a system with one-way coding schemes, which are one-way transformations that either take in the signal in the forward path and output a new signal that passes on in the forward path, or input the signal in the feedback path and output a signal that continues in the feedback path; herein, $\alpha, \beta \in \mathbb{R}$ and  $0< \left| \alpha \right|, \left| \beta \right| < \infty$.

\begin{figure}
	\vspace*{-0mm}
	\begin{center}
		\includegraphics [width=0.5\textwidth]{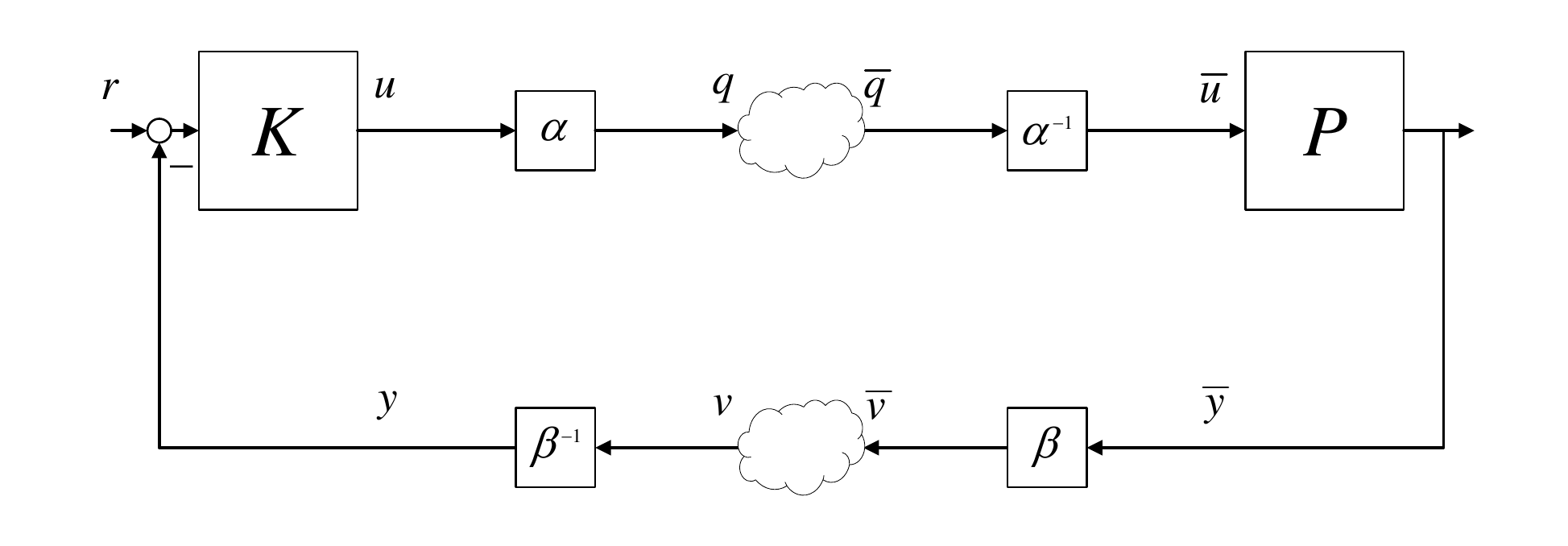}
		\vspace*{-9mm}
		\caption{A networked feedback system with one-way coding.}
		\label{figureoneway}
	\end{center}
	\vspace*{-3mm}
\end{figure}

For simplicity, we denote the inverse of two-way coding $M$ as 
\begin{flalign} \label{inverse}
\left[
\begin{array}{cc}
\overline{a} & \overline{b} \\
\overline{c} & \overline{d} \\
\end{array}\right]
= M^{-1} = \left[
\begin{array}{cc}
\frac{d}{ad-bc} & -\frac{b}{ad-bc} \\
-\frac{c}{ad-bc} & \frac{a}{ad-bc} \\
\end{array}\right],
\end{flalign}
where $\overline{a}, \overline{b}, \overline{c}, \overline{d} \in \mathbb{R}$. As illustrated on the plant side in Fig.~\ref{figure1}, the inverse of two-way coding $M$ denotes another two-way coding.

At this point, we do not impose any assumptions on the controller $K$ and plant $P$ except that the closed-loop system is stable; we now prove the following result for this generic setting.


\begin{proposition} \label{cancel}
	If $\overline{q} \left( t \right) = q \left( t \right)$ and $v \left( t \right) = \overline{v} \left( t \right)$, then $\overline{u} \left( t \right) = u \left( t \right)$ and $y \left( t \right) = \overline{y} \left( t \right)$.
\end{proposition}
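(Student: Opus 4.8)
The plan is to chase the signals around the two-way coding block and its inverse, exploiting the two hypotheses $\overline{q}(t)=q(t)$ and $v(t)=\overline{v}(t)$ to reduce everything to a pair of linear identities that are automatically satisfied because $M^{-1}M = I$. First I would write down the defining relation of the two-way coding on the controller side, namely $\bigl[\begin{smallmatrix} q\\ y\end{smallmatrix}\bigr] = M \bigl[\begin{smallmatrix} u\\ v\end{smallmatrix}\bigr]$, and the analogous relation for the inverse two-way coding $M^{-1}$ on the plant side, which (reading Fig.~\ref{figure1}) takes in $\overline{q}$ and $\overline{v}$ and produces $\overline{u}$ and $\overline{y}$ according to $\bigl[\begin{smallmatrix}\overline{u}\\ \overline{v}\end{smallmatrix}\bigr] = M^{-1}\bigl[\begin{smallmatrix}\overline{q}\\ \overline{y}\end{smallmatrix}\bigr]$ (or the equivalent arrangement consistent with the figure's wiring). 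The key step is then to substitute $\overline{q}=q$ and to note that the feedback-path equality $v=\overline{v}$ couples the two blocks.

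Next I would carry out the substitution explicitly: from $q = au + bv$ and $y = cu + dv$, and from the inverse block $\overline{u} = \overline{a}\,\overline{q} + \overline{b}\,\overline{y}$ together with whatever relation produces $\overline{v}$, I would impose $\overline{q}=q$ and $v=\overline{v}$ and solve. Because $\bigl[\begin{smallmatrix}\overline{a}&\overline{b}\\ \overline{c}&\overline{d}\end{smallmatrix}\bigr] = M^{-1}$ as given in \eqref{inverse}, plugging $q = au+bv$ into $\overline{u} = \overline{a} q + \overline{b}\,\overline{y}$ and using the matching equation for the feedback component will collapse, via $M^{-1}M=I$, to $\overline{u}=u$ and $\overline{y}=y$. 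Condition \eqref{condition1}, in particular $ad-bc\neq 0$, is exactly what guarantees $M^{-1}$ exists so that this cancellation is legitimate; the boundedness caveat $|ad-bc|<\infty$ ensures the entries of $M^{-1}$ are finite.

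The main obstacle I anticipate is purely bookkeeping rather than conceptual: getting the orientation of the inverse block right from the figure, i.e. correctly identifying which of $\overline{q},\overline{y}$ are inputs and which of $\overline{u},\overline{v}$ are outputs of the plant-side two-way coding, so that the composition of $M$ with its inverse really is $M^{-1}M$ and not some mismatched product. Once the wiring is pinned down, the argument is a two-line matrix identity. I would present it by stacking the two hypotheses into the vector equation $\bigl[\begin{smallmatrix}\overline{u}\\ \overline{v}\end{smallmatrix}\bigr] = M^{-1}\bigl[\begin{smallmatrix}q\\ y\end{smallmatrix}\bigr] = M^{-1} M \bigl[\begin{smallmatrix}u\\ v\end{smallmatrix}\bigr] = \bigl[\begin{smallmatrix}u\\ v\end{smallmatrix}\bigr]$, and then reading off the first component as $\overline{u}=u$ and combining the second component $\overline{v}=v$ with $y = cu+dv = c\overline{u}+d\overline{v} = \overline{y}$ to conclude.
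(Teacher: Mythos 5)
There is a genuine gap, and it is exactly the bookkeeping issue you flagged but then resolved incorrectly. The plant-side block computes $\left(\overline{u},\overline{v}\right)$ from the inputs $\left(\overline{q},\overline{y}\right)$ via $M^{-1}$, so your concluding chain ``$\left(\overline{u},\overline{v}\right) = M^{-1}\left(q,y\right) = M^{-1}M\left(u,v\right) = \left(u,v\right)$'' is legitimate only if both $\overline{q}=q$ \emph{and} $\overline{y}=y$ hold; the first is a hypothesis, but the second is half of the conclusion. The actual second hypothesis is $v=\overline{v}$, which your chain never uses --- tellingly, its second component ``derives'' $\overline{v}=v$, i.e., it re-outputs an assumption while silently assuming what is to be proved. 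As committed to, the argument is circular.

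The correct route (the paper's) is to re-solve each coding block for the pair of signals directed \emph{away} from the channel in terms of the pair carried \emph{by} the channel: from $q=au+bv$, $y=cu+dv$ one gets $u=a^{-1}q-a^{-1}bv$ and $y=ca^{-1}q+\left(d-ca^{-1}b\right)v$, and from $\overline{u}=\overline{a}\,\overline{q}+\overline{b}\,\overline{y}$, $\overline{v}=\overline{c}\,\overline{q}+\overline{d}\,\overline{y}$, using \eqref{inverse}, one gets $\overline{u}=a^{-1}\overline{q}-a^{-1}b\,\overline{v}$ and $\overline{y}=ca^{-1}\overline{q}+\left(d-ca^{-1}b\right)\overline{v}$; since the two affine maps coincide, $\overline{q}=q$ and $v=\overline{v}$ immediately give $\overline{u}=u$ and $y=\overline{y}$. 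Note that this partial inversion divides by $a$ and by $\overline{d}=a/\left(ad-bc\right)$, which is precisely why \eqref{condition1} requires $ad\neq 0$ and not merely $ad-bc\neq 0$: your $M^{-1}M=I$ argument never touches this condition, yet the proposition actually fails when $a=0$ and $bc\neq 0$ (then $\overline{v}=\overline{q}/b$ is independent of $\overline{y}$, so the hypotheses impose no constraint tying $\overline{y}$ to $y$). Your middle paragraph (``use the matching equation for the feedback component'') gestures toward the correct computation, but the proof you actually present does not carry it out.
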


\begin{proof}
	Since
	\begin{flalign}
	\left[
	\begin{array}{c}
	q \left( t \right)\\
	y \left( t \right)\\
	\end{array}
	\right]
	=\left[
	\begin{array}{cc}
	a & b \\
	c & d \\
	\end{array}\right]
	\left[
	\begin{array}{c}
	u \left( t \right)\\
	v \left( t \right)\\
	\end{array}
	\right], \nonumber
	\end{flalign}
	we have
	\begin{flalign}
	u \left( t \right) 
	= a^{-1} q \left( t \right) - a^{-1} b v \left( t \right), \nonumber
	\end{flalign}
	and
	\begin{flalign}
	y \left( t \right) 
	= c u \left( t \right) + d v \left( t \right)
	= c a^{-1} q \left( t \right) + \left( d - c a^{-1} b \right) v \left( t \right). \nonumber
	\end{flalign}
	Similarly, since
	\begin{flalign}
	\left[
	\begin{array}{c}
	\overline{u} \left( t \right)\\
	\overline{v} \left( t \right)\\
	\end{array}
	\right]
	=\left[
	\begin{array}{cc}
	\overline{a} & \overline{b} \\
	\overline{c} & \overline{d} \\
	\end{array}\right]
	\left[
	\begin{array}{c}
	\overline{q} \left( t \right)\\
	\overline{y} \left( t \right)\\
	\end{array}
	\right], \nonumber
	\end{flalign}
	and noting \eqref{inverse}, we have
	\begin{flalign}
	\overline{y} \left( t \right) 
	= - \overline{d}^{-1} \overline{c} \overline{q} \left( t \right) + \overline{d}^{-1} \overline{v} \left( t \right) 
	= c a^{-1} \overline{q} \left( t \right) + \left( d - c a^{-1} b \right) \overline{v} \left( t \right), \nonumber
	\end{flalign}
	and
	\begin{flalign}
	\overline{u} \left( t \right) 
	&= \overline{a} \overline{q} \left( t \right) + \overline{b} \overline{y} \left( t \right)
	= \left( \overline{a} - \overline{b} \overline{d}^{-1} \overline{c} \right) \overline{q} \left( t \right) + \overline{b} \overline{d}^{-1} \overline{v} \left( t \right) \nonumber \\
	&= a^{-1} \overline{q} \left( t \right) - a^{-1} b \overline{v} \left( t \right). \nonumber
	\end{flalign}
	Clearly, when $\overline{q} \left( t \right) = q \left( t \right)$ and $v \left( t \right) = \overline{v} \left( t \right)$, it follows that $\overline{u} \left( t \right) = u \left( t \right)$ and $y \left( t \right) = \overline{y} \left( t \right)$.
\end{proof}

In other words, if $\overline{q} \left( t \right) = q \left( t \right)$ and $v \left( t \right) = \overline{v} \left( t \right)$, the system in Fig.~\ref{figure1}, now equivalent to that of Fig.~\ref{figure2}, reduces to the system depicted in Fig.~\ref{figure3} as the ``original" feedback system without two-way coding. As such, properties, including stability and performance, of the system in Fig.~\ref{figure1} when $\overline{q} \left( t \right) = q \left( t \right)$ and $v \left( t \right) = \overline{v} \left( t \right)$ are equivalent to those of the original system in Fig.~\ref{figure3}.


\begin{figure}
	\vspace*{-0mm}
	\begin{center}
		\includegraphics [width=0.5\textwidth]{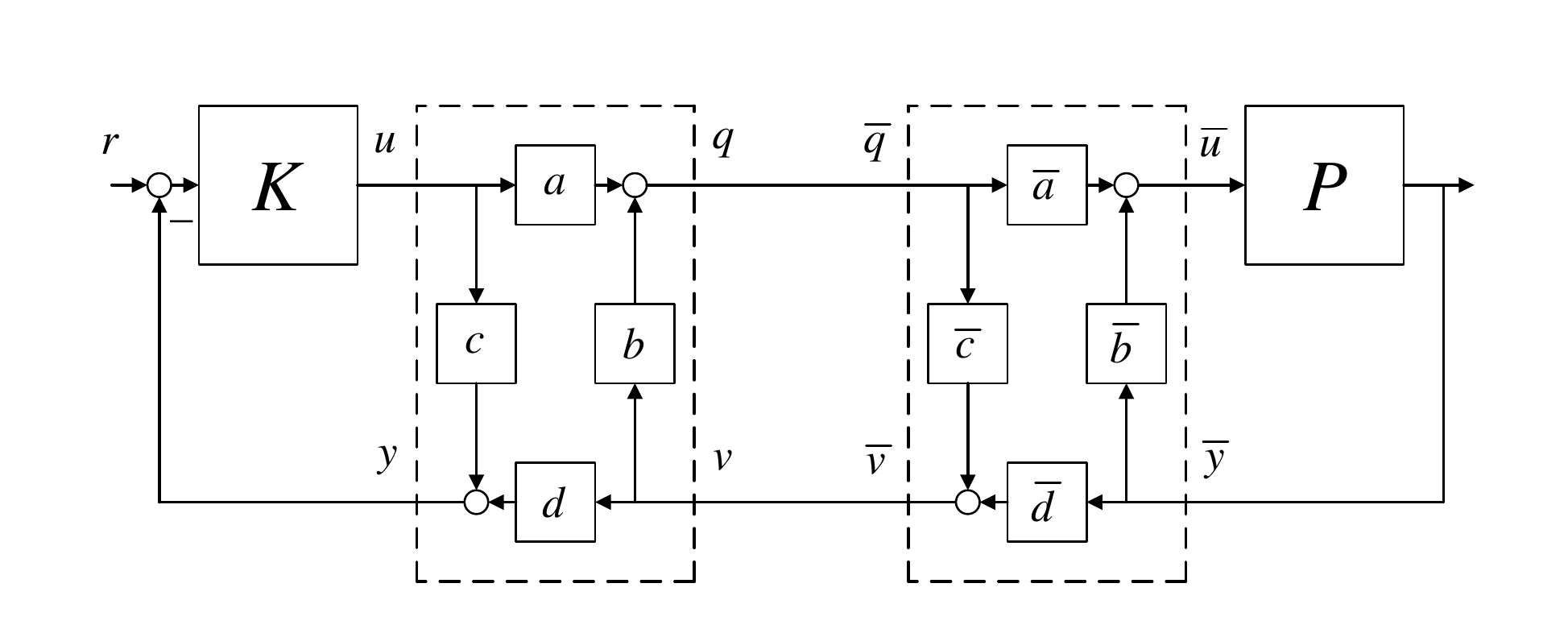}
		\vspace*{-9mm}
		\caption{A feedback system with two-way coding.}
		\label{figure2}
	\end{center}
	\vspace*{-3mm}
\end{figure}

\begin{figure}
	\vspace*{0mm}
	\begin{center}
		\includegraphics [width=0.25\textwidth]{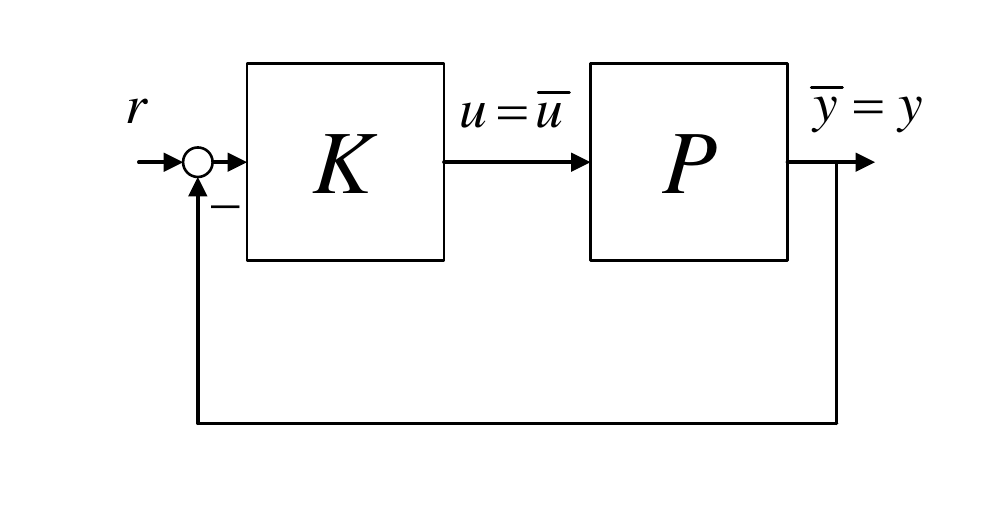}
		\vspace*{-3mm}
		\caption{The original feedback system without two-way coding.}
		\label{figure3}
	\end{center}
	\vspace*{-3mm}
\end{figure}

\subsection{Special Cases of Two-Way Coding}

We now consider some special cases of two-way coding matrices. 
In what follows, we will introduce the (two-way) stretching matrix, shearing matrix, rotation matrix, and so on that are adapted from 2D computer graphics \cite{hughes2014computer}, as well as the scattering transformation from tele-operation \cite{anderson1989bilateral, niemeyer1991stable, hokayem2006bilateral, nuno2011passivity, hirche2012human, hatanaka2015passivity}.

\subsubsection{Two-Way Stretching Matrix} \label{specialtwoway}
Below we list three cases of the two-way stretching matrices. 

Case 1:
\begin{flalign}
M
=\left[
\begin{array}{cc}
a & 0 \\
0 & 1 \\
\end{array}\right],~a \neq 0. \nonumber
\end{flalign}

Case 2:
\begin{flalign}
M
=\left[
\begin{array}{cc}
1 & 0 \\
0 & d \\
\end{array}\right],~d \neq 0. \nonumber
\end{flalign}

\begin{figure}
	\vspace*{-0mm}
	\begin{center}
		\includegraphics [width=0.5\textwidth]{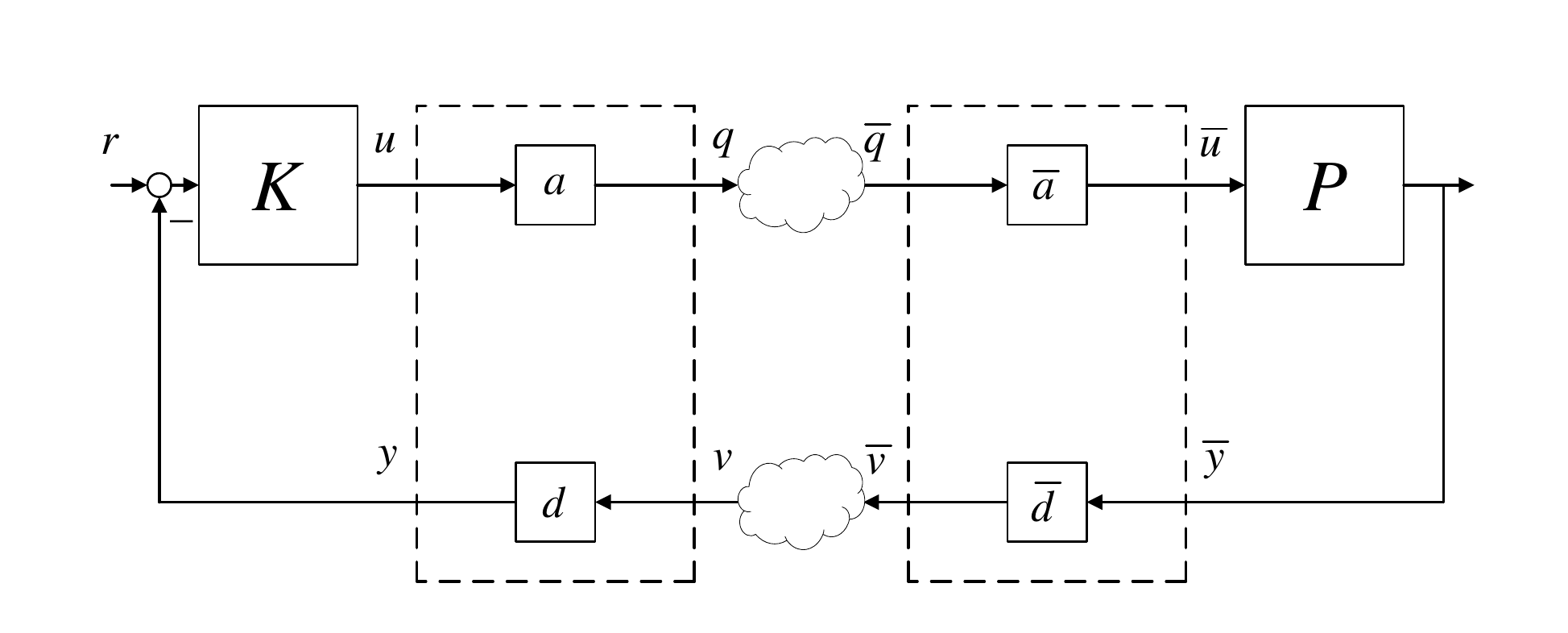}
		\vspace*{-9mm}
		\caption{A networked feedback system with two-way stretching matrix coding.}
		\label{stretch}
	\end{center}
	\vspace*{-3mm}
\end{figure}

Case 3:
\begin{flalign}
M
=\left[
\begin{array}{cc}
a & 0 \\
0 & d \\
\end{array}\right],~ad \neq 0. \nonumber
\end{flalign}
In the case when $ad=1$, $M$ is also known as the two-way squeezing matrix.

The three cases of two-way stretching matrices are easy to understand; they are simply re-scalings of the signals. We now only illustrate case~3 in Fig.~\ref{stretch}. Herein, it is easy to see that $\overline{a} = 1/a$ and $\overline{d} = 1/d$ since
\begin{flalign}
M^{-1}
=\left[
\begin{array}{cc}
\frac{1}{a} & 0 \\
0 & \frac{1}{d} \\
\end{array}\right]. \nonumber
\end{flalign}

As a matter of fact, the two-way stretching matrices reduce to two one-way re-scaling transformations as one-way coding schemes (cf. Fig.~\ref{figureoneway}); we will discuss the differences between two-way coding and one-way coding in more details in the subsequent sections.

\begin{figure}
	\vspace*{-0mm}
	\begin{center}
		\includegraphics [width=0.5\textwidth]{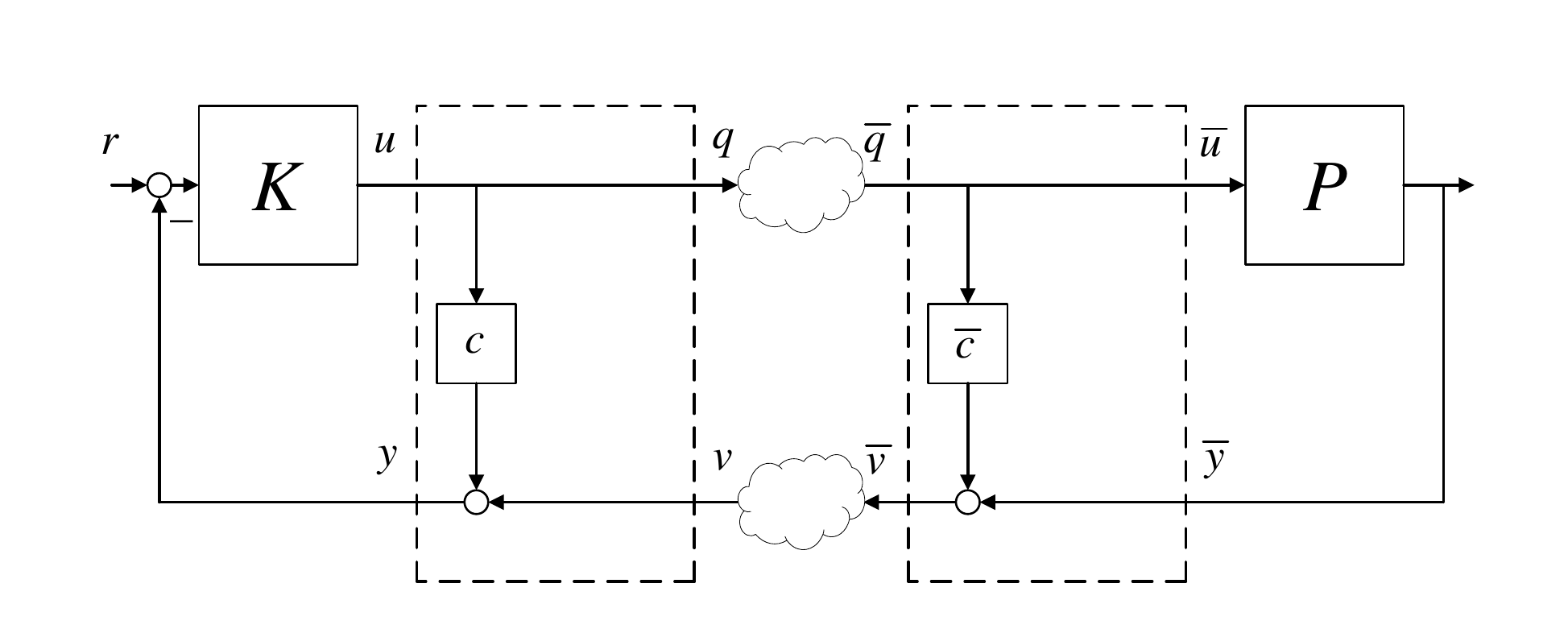}
		\vspace*{-9mm}
		\caption{A networked feedback system with two-way shearing matrix coding: case~1.}
		\label{shear1}
	\end{center}
	\vspace*{-6mm}
\end{figure}

\subsubsection{Two-Way Shearing Matrix} Three cases of the two-way shearing matrices are given below. 

Case 1:
\begin{flalign}
M
=\left[
\begin{array}{cc}
1 & 0 \\
c & 1 \\
\end{array}\right],~
M^{-1}
=\left[
\begin{array}{cc}
1 & 0 \\
-c & 1 \\
\end{array}\right]. \nonumber
\end{flalign}
In this case, we have the illustration given in Fig.~\ref{shear1}, where $\overline{c} = -c$. Simply speaking, the idea is to create a ``parallel system" on the plant side, and compensate for it on the controller side. 

\begin{figure}
	\vspace*{-0mm}
	\begin{center}
		\includegraphics [width=0.5\textwidth]{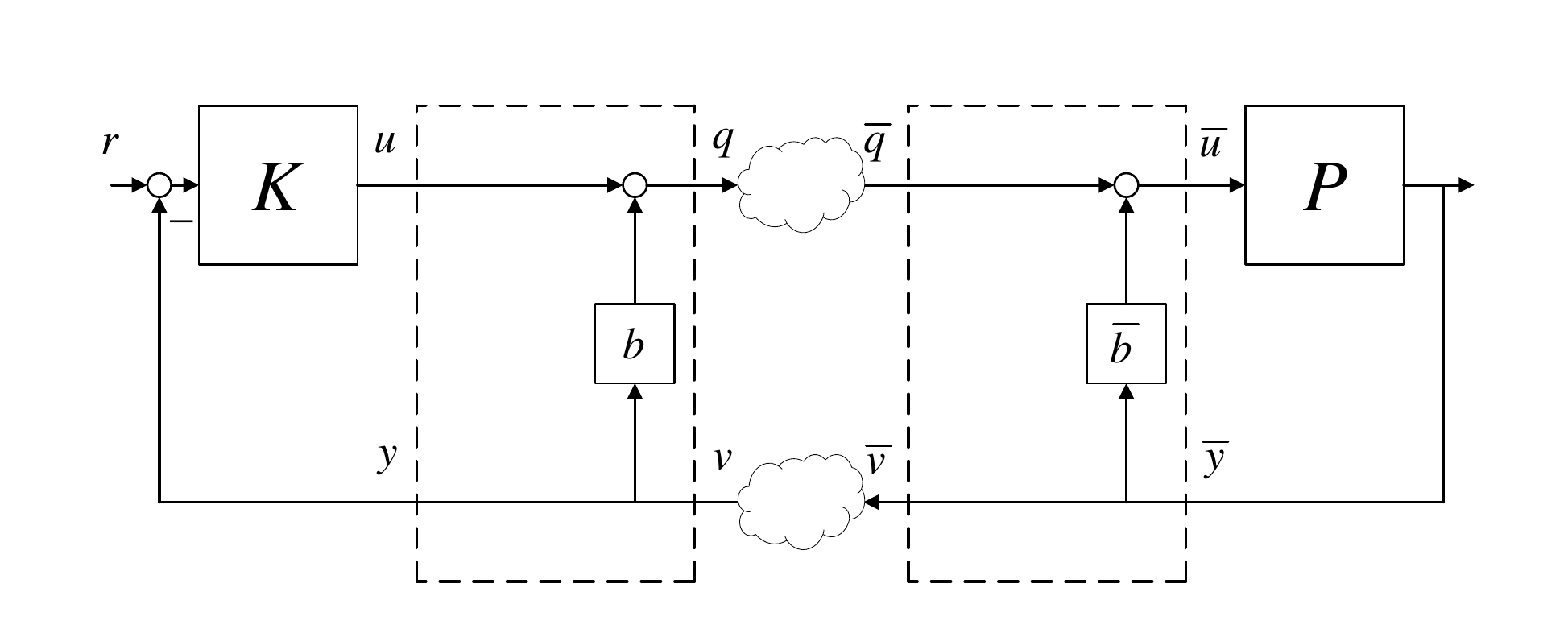}
		\vspace*{-9mm}
		\caption{A networked feedback system with two-way shearing matrix coding: case~2.}
		\label{shear2}
	\end{center}
	\vspace*{-3mm}
\end{figure}

Case 2:
\begin{flalign}
M
=\left[
\begin{array}{cc}
1 & b \\
0 & 1 \\
\end{array}\right],~
M^{-1}
=\left[
\begin{array}{cc}
1 & -b \\
0 & 1 \\
\end{array}\right]. \nonumber
\end{flalign}
In this case, we have the illustration given in Fig.~\ref{shear2}, where $\overline{b} = -b$. The idea is to add a ``local feedback controller" on the plant side, and compensate for it on the controller side.

\begin{figure}
	\vspace*{-0mm}
	\begin{center}
		\includegraphics [width=0.5\textwidth]{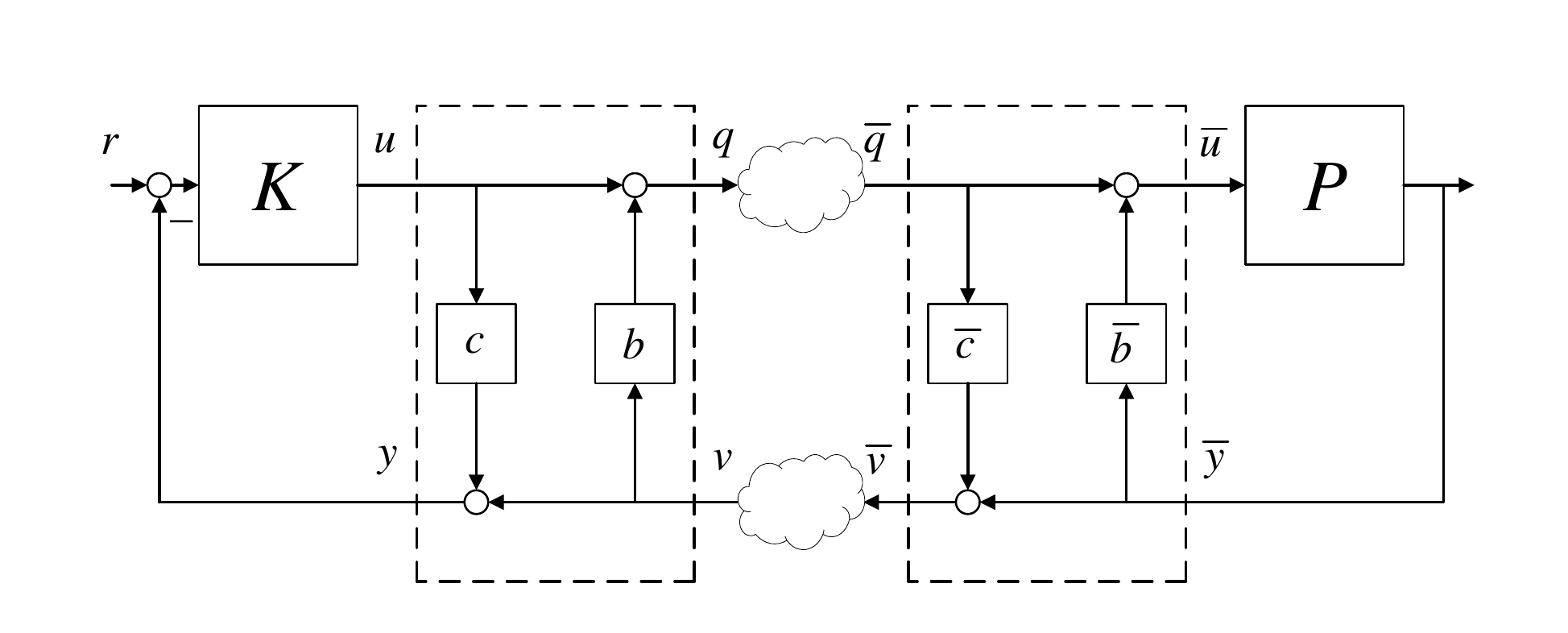}
		\vspace*{-9mm}
		\caption{A networked feedback system with two-way shearing matrix coding: case~3.}
		\label{shear3}
	\end{center}
	\vspace*{-3mm}
\end{figure}

Case 3:
\begin{flalign}
M
=\left[
\begin{array}{cc}
1 & b \\
c & 1 \\
\end{array}\right],~
M^{-1}
=\left[
\begin{array}{cc}
1 & -b \\
-c & 1 \\
\end{array}\right]. \nonumber 
\end{flalign}
Herein, $bc \neq 1$. In this case, we have the illustration given in Fig.~\ref{shear3}, where $\overline{b} = -b$ and $\overline{c} = -c$.

\subsubsection{Two-Way Rotation Matrix} The two-way rotation matrix and its inverse are given by
\begin{flalign}
M
=\left[
\begin{array}{cc}
\cos \theta & \sin \theta \\
-\sin \theta & \cos \theta \\
\end{array}\right],~
M^{-1}
=\left[
\begin{array}{cc}
\cos \theta & - \sin \theta \\
\sin \theta & \cos \theta \\
\end{array}\right]. \nonumber
\end{flalign}
Herein, $ \theta \neq k \pi /2, k= 2m+1, m \in \mathrm{Z}$.

\subsubsection{Scattering Transformation} The scattering transformation is given by \cite{anderson1989bilateral, niemeyer1991stable, hokayem2006bilateral, nuno2011passivity, hirche2012human, hatanaka2015passivity}
\begin{flalign}
\left[
\begin{array}{c}
q \left( t \right)\\
v \left( t \right)\\
\end{array}
\right]
&=\left[
\begin{array}{cc}
\frac{\sqrt{2}}{2} & \frac{\sqrt{2}}{2} \\
-\frac{\sqrt{2}}{2} & \frac{\sqrt{2}}{2} \\
\end{array}\right]
\left[
\begin{array}{cc}
\sqrt{\gamma} & 0 \\
0 & \frac{1}{\sqrt{\gamma}} \\
\end{array}\right]
\left[
\begin{array}{c}
u \left( t \right)\\
y \left( t \right)\\
\end{array}
\right] \nonumber \\
&= \left[
\begin{array}{cc}
\frac{\sqrt{2\gamma}}{2} & \frac{\sqrt{2}}{2\sqrt{\gamma}} \\
-\frac{\sqrt{2\gamma}}{2} & \frac{\sqrt{2}}{2\sqrt{\gamma}} \\
\end{array}\right]
\left[
\begin{array}{c}
u \left( t \right)\\
y \left( t \right)\\
\end{array}
\right]. \nonumber
\end{flalign}
Herein, $0 < \gamma < \infty$.
As a consequence,
\begin{flalign}
\left[
\begin{array}{c}
q \left( t \right)\\
y \left( t \right)\\
\end{array}
\right]
= \left[
\begin{array}{cc}
\sqrt{2\gamma} & 1 \\
\gamma & \sqrt{2\gamma} \\
\end{array}\right]
\left[
\begin{array}{c}
u \left( t \right)\\
v \left( t \right)\\
\end{array}
\right]. \nonumber
\end{flalign}
Correspondingly,
\begin{flalign}
M= \left[
\begin{array}{cc}
\sqrt{2\gamma} & 1 \\
\gamma & \sqrt{2\gamma} \\
\end{array}\right],~
M^{-1}= \left[
\begin{array}{cc}
\sqrt{\frac{2}{\gamma}} & - \frac{1}{\gamma} \\
- 1 & \sqrt{\frac{2}{\gamma}} \\
\end{array}\right]. \nonumber 
\end{flalign}

More generally, the scattering transformation can be extended as \cite{anderson1989bilateral, niemeyer1991stable, hokayem2006bilateral, nuno2011passivity, hirche2012human, hatanaka2015passivity}
\begin{flalign}
\left[
\begin{array}{c}
q \left( t \right)\\
v \left( t \right)\\
\end{array}
\right]
&=\left[
\begin{array}{cc}
\cos \theta & \sin \theta \\
-\sin \theta & \cos \theta \\
\end{array}\right]
\left[
\begin{array}{cc}
\sqrt{\gamma} & 0 \\
0 & \frac{1}{\sqrt{\gamma}} \\
\end{array}\right]
\left[
\begin{array}{c}
u \left( t \right)\\
y \left( t \right)\\
\end{array}
\right] \nonumber \\
&= \left[
\begin{array}{cc}
\sqrt{\gamma} \cos \theta & \frac{1}{\sqrt{\gamma}} \sin \theta \\
-\sqrt{\gamma} \sin \theta & \frac{1}{\sqrt{\gamma}} \cos \theta \\
\end{array}\right]
\left[
\begin{array}{c}
u \left( t \right)\\
y \left( t \right)\\
\end{array}
\right]. \nonumber
\end{flalign}
Herein, $0 < \gamma < \infty$ and $ \theta \neq k \pi /2, k= 2m+1, m \in \mathrm{Z}$. 
As a result,
\begin{flalign}
\left[
\begin{array}{c}
q \left( t \right)\\
y \left( t \right)\\
\end{array}
\right]
= \left[
\begin{array}{cc}
\frac{\sqrt{\gamma}}{\cos \theta} & \tan \theta \\
\gamma \tan \theta & \frac{\sqrt{\gamma}}{\cos \theta}\\
\end{array}\right]
\left[
\begin{array}{c}
u \left( t \right)\\
v \left( t \right)\\
\end{array}
\right], \nonumber
\end{flalign}
and
\begin{flalign}
M= \left[
\begin{array}{cc}
\frac{\sqrt{\gamma}}{\cos \theta} & \tan \theta \\
\gamma \tan \theta & \frac{\sqrt{\gamma}}{\cos \theta}\\
\end{array}\right],~
M^{-1} = \left[
\begin{array}{cc}
\frac{1}{\sqrt{\gamma} \cos \theta} & - \frac{\tan \theta}{\gamma } \\
- \tan \theta & \frac{1}{\sqrt{\gamma} \cos \theta}\\
\end{array}\right]. \nonumber
\end{flalign}

\section{Analysis of LTI Systems with Two-Way Coding} 

In this section, we analyze in particular LTI feedback control systems. Consider the SISO feedback system with two-way coding depicted in Fig.~\ref{injection}. Assume that herein the controller $K$ and plant $P$ are LTI with transfer functions $K \left( s \right)$ and $P \left( s \right)$, respectively. In addition, let $r \left( t \right)$, $u \left( t \right)$, $\overline{u} \left( t \right)$, $y \left( t \right)$, $\overline{y} \left( t \right)$, $q \left( t \right)$, $ \overline{q} \left( t \right)$, $v \left( t \right)$, $\overline{v} \left( t \right) \in \mathbb{R}$. Meanwhile, suppose that injection (additive) attacks $w \left( t \right) \in \mathbb{R}$ and $z \left( t \right) \in \mathbb{R}$ exist in the forward path and feedback path of the control systems, respectively. Let $R \left( s \right)$, $U \left( s \right)$, $ \overline{U} \left( s \right)$, $Y \left( s \right)$, $\overline{Y} \left( s \right)$, $Q \left( s \right)$, $\overline{Q} \left( s \right)$, $V \left( s \right)$, $\overline{V} \left( s \right)$, $W \left( s \right)$, $Z \left( s \right)$ represent the Laplace transforms, assuming that they exist, of the signals $r \left( t \right)$, $u \left( t \right)$, $\overline{u} \left( t \right)$, $y \left( t \right)$, $\overline{y} \left( t \right)$, $q \left( t \right)$, $\overline{q} \left( t \right)$, $v \left( t \right)$, $\overline{v} \left( t \right)$, $ w \left( t \right)$, $z \left( t \right)$. From now on, we assume that all the transfer functions of the systems are with zero initial conditions, unless otherwise specified.

\begin{figure}
	\vspace*{-0mm}
	\begin{center}
		\includegraphics [width=0.5\textwidth]{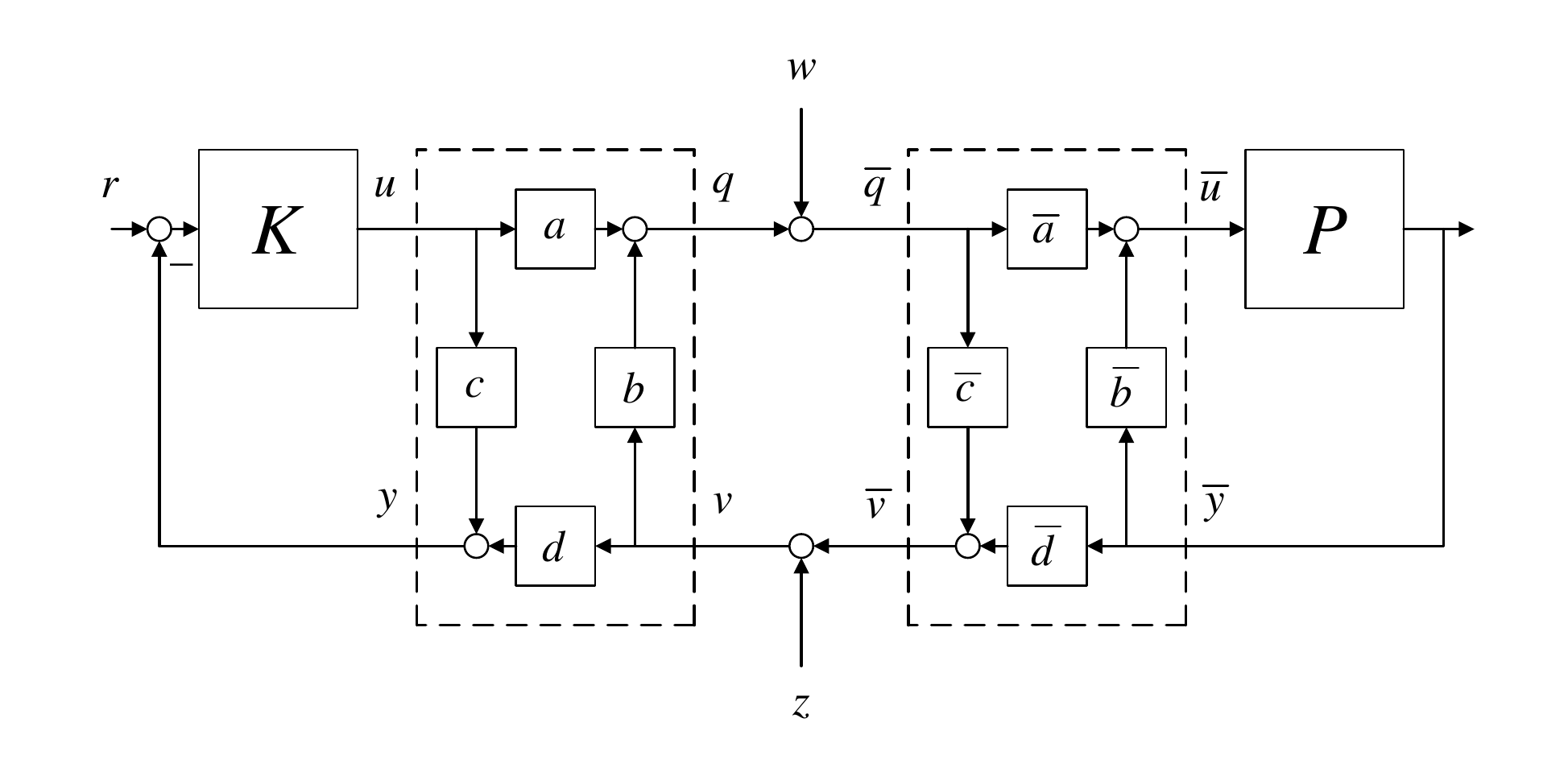}
		\vspace*{-9mm}
		\caption{A feedback system with two-way coding under injection attacks.}
		\label{injection}
	\end{center}
	\vspace*{-3mm}
\end{figure}

We first provide expressions for the Laplace transforms of the real plant output $\overline{y} \left( t \right)$ and the plant output $y \left( t \right)$ as seen on the controller side, given reference $r \left( t \right)$ and under injection attacks $w \left( t \right)$ and $z \left( t \right)$.

\begin{theorem} \label{foundation}
	Consider the SISO feedback system with two-way coding under injection attacks depicted in Fig.~\ref{injection}.
	Assume that controller $K$ and plant $P$ are LTI with transfer functions $K \left( s \right)$ and $P \left( s \right)$, respectively, and that the closed-loop system is stable.
	Then, 
	\begin{flalign} \label{injection1}
	\overline{Y} \left( s \right)
	&= \frac{K \left( s \right) P \left( s \right)}{1+ K \left( s \right) P \left( s \right)} R \left( s \right)  + \frac{ a^{-1}\left[ 1 + c K \left( s \right) \right] P \left( s \right)}{1+ K \left( s \right) P \left( s \right)} W \left( s \right)
	\nonumber \\
	&\ \ \ \  + \frac{ a^{-1} \left[ b - \left( ad - bc \right) K \left( s \right) \right] P \left( s \right)}{1+ K \left( s \right) P \left( s \right)} Z \left( s \right),
	\end{flalign}
	and
	\begin{flalign} \label{injection2}
	Y \left( s \right) 
	&= \frac{K \left( s \right) P \left( s \right)}{1+ K \left( s \right) P \left( s \right)} R \left( s \right) +  \frac{a^{-1} \left[ P \left( s \right) - c \right]}{1+ K \left( s \right) P \left( s \right)}  W \left( s \right) \nonumber \\
	&\ \ \ \  +  \frac{a^{-1} \left[ ad - bc + b P \left( s \right) \right]}{1+ K \left( s \right) P \left( s \right)} Z \left( s \right)
	. 
	\end{flalign}
\end{theorem}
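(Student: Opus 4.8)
The plan is to reduce the diagram to a handful of scalar relations in the Laplace domain and then solve a tiny linear system. First I would record the constraints imposed by the two coding blocks, but in the already-rearranged form that appears inside the proof of Proposition~\ref{cancel}: on the controller side $M$ is equivalent to $U = a^{-1}Q - a^{-1}bV$ and $Y = ca^{-1}Q + (d - ca^{-1}b)V$, and on the plant side $M^{-1}$ is equivalent to $\overline{U} = a^{-1}\overline{Q} - a^{-1}b\overline{V}$ and $\overline{Y} = ca^{-1}\overline{Q} + (d - ca^{-1}b)\overline{V}$ (both legitimate because $a \neq 0$ and $\overline{d} = a/(ad-bc)\neq 0$ by \eqref{condition1}). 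To these I would append the remaining interconnection equations read off from Fig.~\ref{injection}: the forward-path attack $\overline{Q} = Q + W$, the feedback-path attack $V = \overline{V} + Z$, the plant $\overline{Y} = P(s)\overline{U}$, and the controller $U = K(s)(R - Y)$.

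The key step is to subtract the controller-side identities from the plant-side identities. Using $\overline{Q} - Q = W$ and $\overline{V} - V = -Z$, the coding terms cancel and one is left with the two clean relations $\overline{U} = U + a^{-1}W + a^{-1}bZ$ and $\overline{Y} = Y + ca^{-1}W - (d - ca^{-1}b)Z$. Conceptually, this is the heart of the statement: two-way coding merely re-routes the injected signals $W$ and $Z$ into equivalent disturbances acting at the plant input and at the measured output.

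It then remains to close the loop. Substituting these two relations into $\overline{Y} = P\overline{U}$ and eliminating $U$ via $U = K(R-Y)$ gives a single scalar equation for $Y$; solving it and collecting coefficients of $W$ and $Z$ yields \eqref{injection2}, where the $Z$-coefficient is tidied using $a^{-1}b(P-c) + d = a^{-1}(ad - bc + bP)$. Plugging this expression for $Y$ back into $\overline{Y} = Y + ca^{-1}W - (d - ca^{-1}b)Z$ and combining everything over the common denominator $1 + K(s)P(s)$ produces \eqref{injection1}; here the $W$-coefficient simplifies via $(P-c)/(1+KP) + c = P(1+cK)/(1+KP)$, and the $Z$-coefficient collapses because the $-a^{-1}bc$ and $+ca^{-1}b$ contributions cancel, leaving $a^{-1}[\,b - (ad-bc)K\,]P/(1+KP)$.

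I do not anticipate a genuine conceptual obstacle; the real content is careful bookkeeping, and the closed-loop stability hypothesis is needed only to guarantee that the Laplace-domain transfer functions involved are well defined. The one place demanding care is fixing the sign conventions and injection points in Fig.~\ref{injection} — whether the attack enters as $\overline{Q} = Q + W$ or $Q = \overline{Q} + W$, whether $V = \overline{V} + Z$ or $\overline{V} = V + Z$, and the sign in $U = K(R-Y)$ — since every coefficient in \eqref{injection1}--\eqref{injection2} depends on these; once they are pinned down, I would carry out the remaining simplification organized around the abbreviation $\Delta = ad - bc$.
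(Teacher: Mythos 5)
Your proposal is correct: the interconnection equations you read off Fig.~\ref{injection} ($\overline{Q}=Q+W$, $V=\overline{V}+Z$, $\overline{Y}=P\overline{U}$, $U=K(R-Y)$) are the same ones the paper uses, and your intermediate identities check out, e.g.\ $\overline{U}=U+a^{-1}W+a^{-1}bZ$, $\overline{Y}=Y+ca^{-1}W-(d-ca^{-1}b)Z$, and the simplifications $a^{-1}b(P-c)+d=a^{-1}(ad-bc+bP)$ and $(P-c)/(1+KP)+c=P(1+cK)/(1+KP)$ all reproduce the coefficients in \eqref{injection1}--\eqref{injection2}. However, your route is organized differently from the paper's. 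The paper works from the attacker's vantage point: it first derives the relation $\overline{V}(s)=\frac{P(s)-c}{ad-bc+bP(s)}\overline{Q}(s)$ across the plant-side coding and the relation $Q(s)=\frac{aK(s)}{1+cK(s)}R(s)+\frac{b-(ad-bc)K(s)}{1+cK(s)}V(s)$ across the controller-side coding, then closes the loop at the attack injection points by solving a fixed-point equation in $\overline{Q}(s)$ (using the identity $1-\overline{K}\,\overline{P}=\frac{ad(1+KP)}{(1+cK)(ad-bc+bP)}$), and finally recovers $\overline{Y}$ and $Y$. You instead subtract the controller-side coding relations from the plant-side ones to convert the attacks into equivalent disturbances at the plant input and measured output of the \emph{original} loop, and then solve for $Y$ directly with the unmodified $P$ and $K$. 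Your bookkeeping is somewhat lighter and gives a nice interpretation of two-way coding as re-routing $W,Z$ into input/output disturbances scaled by the coding parameters; the paper's heavier route has the advantage that its intermediate expressions are exactly the equivalent plant $\overline{P}$, equivalent controller $\overline{K}$, and equivalent reference $\overline{R}$ of the subsequent corollary (the corollary's proof cites \eqref{equalP} and \eqref{equalK} directly), so with your proof those formulas would need a short separate derivation. Your caution about sign conventions is warranted but resolved: the conventions you adopted are the ones the paper uses.
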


\vspace*{0mm}

\begin{proof}
	Since 
	\begin{flalign}
	\overline{Y} \left( s \right) 
	= P \left( s \right) \overline{U} \left( s \right), \nonumber 
	\end{flalign} 
	and 
	\begin{flalign}
	\overline{U} \left( s \right) 
	= \overline{b} \overline{Y} \left( s \right) + \overline{a} \overline{Q} \left( s \right), \nonumber 
	\end{flalign}
	we have
	\begin{flalign}
	\overline{U} \left( s \right) 
	= \overline{b} P \left( s \right) \overline{U} \left( s \right) + \overline{a} \overline{Q} \left( s \right), \nonumber 
	\end{flalign}
	and thus
	\begin{flalign}
	\overline{U} \left( s \right) 
	= \frac{\overline{a} \overline{Q} \left( s \right)}{1 - \overline{b} P \left( s \right)}. \nonumber 
	\end{flalign}
	Correspondingly,
	\begin{flalign}
	\overline{Y} \left( s \right) 
	= P \left( s \right) \overline{U} \left( s \right)
	= \frac{\overline{a} P \left( s \right) \overline{Q} \left( s \right)}{1 - \overline{b} P \left( s \right)}. \nonumber 
	\end{flalign}
	As a consequence,
	\begin{flalign} \label{equalP}
	\overline{V} \left( s \right) 
	&= \overline{d} \overline{Y} \left( s \right) + \overline{c} \overline{Q} \left( s \right)
	= \frac{\overline{a} \overline{d} P \left( s \right) \overline{Q} \left( s \right)}{1 - \overline{b} P \left( s \right)} + \overline{c} \overline{Q} \left( s \right) \nonumber \\
	&= \left[ \frac{\overline{a} \overline{d} P \left( s \right)}{1 - \overline{b} P \left( s \right)} + \overline{c} \right] \overline{Q} \left( s \right) 
	= \frac{\left( \overline{a} \overline{d} - \overline{b} \overline{c} \right) P \left( s \right) + \overline{c}}{1 - \overline{b} P \left( s \right)} \overline{Q} \left( s \right)  \nonumber \\
	&= \frac{P \left( s \right) - c}{ad - bc + b P \left( s \right)} \overline{Q} \left( s \right) 
	. 
	\end{flalign}
	Similarly, since 
	\begin{flalign}
	U \left( s \right) 
	= K \left( s \right) \left[ R \left( s \right) - Y \left( s \right) \right], \nonumber 
	\end{flalign} 
	and 
	\begin{flalign}
	Y \left( s \right) 
	= c U \left( s \right) + d V \left( s \right), \nonumber 
	\end{flalign}
	we have
	\begin{flalign}
	Y \left( s \right) 
	= c K \left( s \right) \left[ R \left( s \right) - Y \left( s \right) \right] + d V \left( s \right), \nonumber 
	\end{flalign}
	and hence
	\begin{flalign}
	Y \left( s \right) 
	= \frac{c K \left( s \right) R \left( s \right) + d V \left( s \right) }{1 + c K \left( s \right)}. \nonumber 
	\end{flalign}
	In addition,
	\begin{flalign}
	U \left( s \right) 
	&= K \left( s \right) \left[ R \left( s \right) - Y \left( s \right) \right] \nonumber \\
	&= K \left( s \right) R \left( s \right) - \frac{ K \left( s \right) \left[ c K \left( s \right) R \left( s \right) + d V \left( s \right) \right] }{1 + c K \left( s \right)} \nonumber \\
	&= \frac{ K \left( s \right) \left[ R \left( s \right) - d V \left( s \right) \right]}{1 + c K \left( s \right)}. \nonumber
	\end{flalign}
	Thus,
	\begin{flalign} \label{equalK}
	Q \left( s \right) 
	&= a U \left( s \right) + b V \left( s \right) \nonumber \\
	&= \frac{ a K \left( s \right)}{1 + c K \left( s \right)}  R \left( s \right) - \frac{ a d K \left( s \right)}{1 + c K \left( s \right)} V \left( s \right) + b V \left( s \right) \nonumber \\
	&= \frac{ a K \left( s \right)}{1 + c K \left( s \right)} R \left( s \right) + \frac{ b - \left( a d - b c \right) K \left( s \right)}{1 + c K \left( s \right)} V \left( s \right) \nonumber \\
	&= \frac{ b - \left( a d - b c \right) K \left( s \right)}{1 + c K \left( s \right)} \left[ \frac{ a K \left( s \right)}{b - \left( a d - b c \right) K \left( s \right)} \right] R \left( s \right) \nonumber \\ 
	&\ \ \ \  + \frac{ b - \left( a d - b c \right) K \left( s \right)}{1 + c K \left( s \right)} V \left( s \right). 
	\end{flalign}
	Using \eqref{equalP} and \eqref{equalK}, while noting that
	\begin{flalign}
	\overline{Q} \left( s \right) =  Q \left( s \right) + W \left( s \right), \nonumber 
	\end{flalign}
	and
	\begin{flalign}
	V \left( s \right) =  \overline{V} \left( s \right) + Z \left( s \right), \nonumber 
	\end{flalign}
	we may then obtain that
	\begin{flalign}
	\overline{Q} \left( s \right) 
	&=  Q \left( s \right) + W \left( s \right) \nonumber \\
	&=  W \left( s \right) +  \frac{ a K \left( s \right)}{1 + c K \left( s \right)}  R \left( s \right) + \frac{ b - \left( a d - b c \right) K \left( s \right)}{1 + c K \left( s \right)} V \left( s \right) \nonumber \\
	&= W \left( s \right) +  \frac{ a K \left( s \right)}{1 + c K \left( s \right)} R \left( s \right)  \nonumber \\
	&\ \ \ \  + \frac{ b - \left( a d - b c \right) K \left( s \right)}{1 + c K \left( s \right)} \left[ \overline{V} \left( s \right) + Z \left( s \right) \right] \nonumber \\
	&=  W \left( s \right) + \frac{ a K \left( s \right)}{1 + c K \left( s \right)}  R \left( s \right) \nonumber \\
	&\ \ \ \  + \frac{ b - \left( a d - b c \right) K \left( s \right)}{1 + c K \left( s \right)} \left[ \frac{P \left( s \right) - c}{ad - bc + b P \left( s \right)} \right] \overline{Q} \left( s \right) \nonumber \\
	&\ \ \ \  +  \frac{ b - \left( a d - b c \right) K \left( s \right)}{1 + c K \left( s \right)} Z \left( s \right). \nonumber 
	\end{flalign}
	Hence,
	\begin{flalign}
	&\left\{ 1 - \frac{ b - \left( a d - b c \right) K \left( s \right)}{1 + c K \left( s \right)} \left[ \frac{P \left( s \right) - c}{ad - bc + b P \left( s \right)} \right] \right\}\overline{Q} \left( s \right)  \nonumber \\
	&\ \ \ \ =  W \left( s \right) + \frac{ a K \left( s \right)}{1 + c K \left( s \right)}  R \left( s \right) +  \frac{ b - \left( a d - b c \right) K \left( s \right)}{1 + c K \left( s \right)} Z \left( s \right). \nonumber 
	\end{flalign}
	On the other hand,
	\begin{flalign} 
	&1- \frac{ b - \left( a d - b c \right) K \left( s \right)}{1 + c K \left( s \right)} \left[ \frac{P \left( s \right) - c}{ad - bc + b P \left( s \right)} \right] \nonumber \\
	&\ \ \ \ = \frac{\left[ 1 + c K \left( s \right) \right] \left[ ad - bc + b P \left( s \right) \right]}{\left[ 1 + c K \left( s \right) \right] \left[ ad - bc + b P \left( s \right) \right]} \nonumber \\
	&\ \ \ \ \ \ \ \ - \frac{\left[ b - \left( a d - b c \right) K \left( s \right)\right] \left[  P \left( s \right) - c \right]}{\left[ 1 + c K \left( s \right) \right] \left[ ad - bc + b P \left( s \right) \right]} \nonumber \\
	&\ \ \ \ = \frac{ad\left[ 1 + K \left( s \right) P \left( s \right) \right]}{\left[ 1 + c K \left( s \right) \right] \left[ ad - bc + b P \left( s \right) \right]}
	. \nonumber
	\end{flalign}
	As a result,
	\begin{flalign}
	&\overline{Q} \left( s \right)
	= \frac{d^{-1}K \left( s \right) \left[ ad - bc + b P \left( s \right) \right]}{1+ K \left( s \right) P \left( s \right)} R \left( s \right) \nonumber \\
	&\ \ \ \  + \frac{a^{-1} d^{-1} \left[ 1 + c K \left( s \right) \right] \left[ ad - bc + b P \left( s \right) \right]}{1+ K \left( s \right) P \left( s \right)} W \left( s \right)
	\nonumber \\
	&\ \ \ \  + \frac{ a^{-1} d^{-1} \left[ b - \left( ad - bc \right) K \left( s \right) \right] \left[ ad - bc + b P \left( s \right) \right]}{1+ K \left( s \right) P \left( s \right)} Z \left( s \right). \nonumber
	\end{flalign}
	Thus,
	\begin{flalign}
	\overline{Y} \left( s \right) 
	&= \frac{\overline{a} P \left( s \right) \overline{Q} \left( s \right)}{1 - \overline{b} P \left( s \right)} 
	= \frac{d P \left( s \right) \overline{Q} \left( s \right)}{ad - bc + b P \left( s \right)} \nonumber \\
	&= \frac{K \left( s \right) P \left( s \right)}{1+ K \left( s \right) P \left( s \right)} R \left( s \right)  + \frac{ a^{-1}\left[ 1 + c K \left( s \right) \right] P \left( s \right)}{1+ K \left( s \right) P \left( s \right)} W \left( s \right)
	\nonumber \\
	&\ \ \ \  + \frac{ a^{-1} \left[ b - \left( ad - bc \right) K \left( s \right) \right] P \left( s \right)}{1+ K \left( s \right) P \left( s \right)} Z \left( s \right). \nonumber
	\end{flalign}
	Similarly, we have
	\begin{flalign}
	V \left( s \right) 
	&=  \overline{V} \left( s \right) + Z \left( s \right)
	= \frac{P \left( s \right) - c}{ad - bc + b P \left( s \right)} \overline{Q} \left( s \right) + Z \left( s \right) \nonumber \\
	&= \frac{P \left( s \right) - c}{ad - bc + b P \left( s \right)} \left[Q \left( s \right) + W \left( s \right)\right] + Z \left( s \right) \nonumber \\
	&= \frac{P \left( s \right) - c}{ad - bc + b P \left( s \right)} \left[ \frac{ a K \left( s \right)}{1 + c K \left( s \right)} \right] R \left( s \right) \nonumber \\
	&\ \ \ \  + \frac{P \left( s \right) - c}{ad - bc + b P \left( s \right)} \left[ \frac{ b - \left( a d - b c \right) K \left( s \right)}{1 + c K \left( s \right)} \right] V \left( s \right) \nonumber \\
	&\ \ \ \  + \frac{P \left( s \right) - c}{ad - bc + b P \left( s \right)}  W \left( s \right) + Z \left( s \right)
	, \nonumber 
	\end{flalign}
	and
	\begin{flalign}
	V \left( s \right) 
	&= \frac{d^{-1} K \left( s \right) \left[P \left( s \right) - c \right]}{1+ K \left( s \right) P \left( s \right)} R \left( s \right) \nonumber \\
	&\ \ \ \  +  \frac{a^{-1} d^{-1} \left[ 1 + c K \left( s \right) \right] \left[ P \left( s \right) - c \right]}{1+ K \left( s \right) P \left( s \right)}  W \left( s \right) \nonumber \\
	&\ \ \ \  +  \frac{a^{-1} d^{-1} \left[ 1 + c K \left( s \right) \right] \left[ ad - bc + b P \left( s \right) \right]}{1+ K \left( s \right) P \left( s \right)} Z \left( s \right)
	. \nonumber 
	\end{flalign}
	Consequently,
	\begin{flalign}
	Y \left( s \right) 
	&= \frac{c K \left( s \right) R \left( s \right) + d V \left( s \right) }{1 + c K \left( s \right)} \nonumber \\
	&= \frac{K \left( s \right) P \left( s \right)}{1+ K \left( s \right) P \left( s \right)} R \left( s \right) +  \frac{a^{-1} \left[ P \left( s \right) - c \right]}{1+ K \left( s \right) P \left( s \right)}  W \left( s \right) \nonumber \\
	&\ \ \ \  +  \frac{a^{-1} \left[ ad - bc + b P \left( s \right) \right]}{1+ K \left( s \right) P \left( s \right)} Z \left( s \right)
	. \nonumber 
	\end{flalign}
	This completes the proof.
\end{proof}

\begin{figure}
	\vspace*{-0mm}
	\begin{center}
		\includegraphics [width=0.5\textwidth]{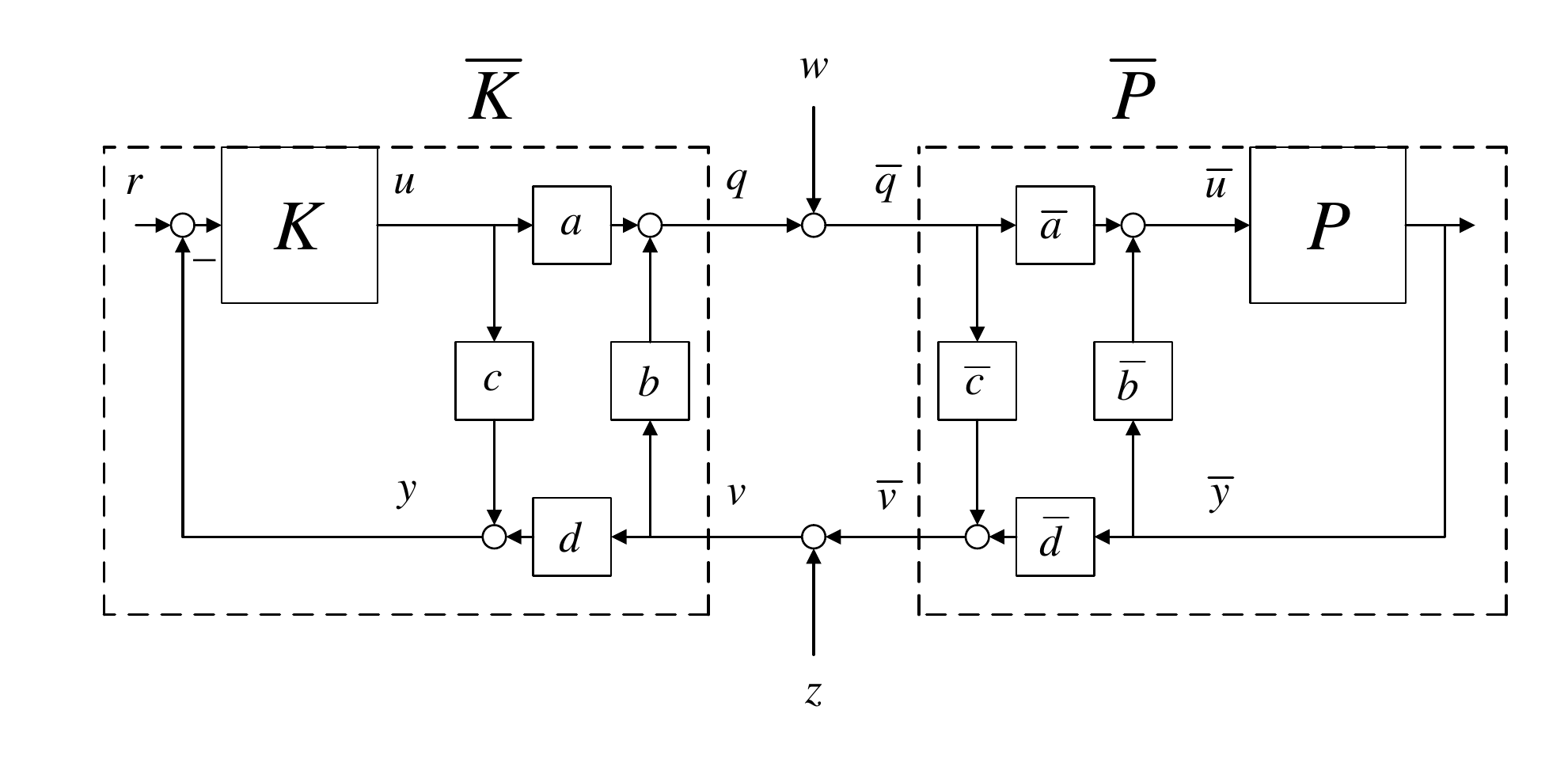}
		\vspace*{-9mm}
		\caption{A feedback system with two-way coding: from the viewpoint of the attacker.}
		\label{injection_e1}
	\end{center}
	\vspace*{-3mm}
\end{figure}

\begin{figure}
	\vspace*{-0mm}
	\begin{center}
		\includegraphics [width=0.25\textwidth]{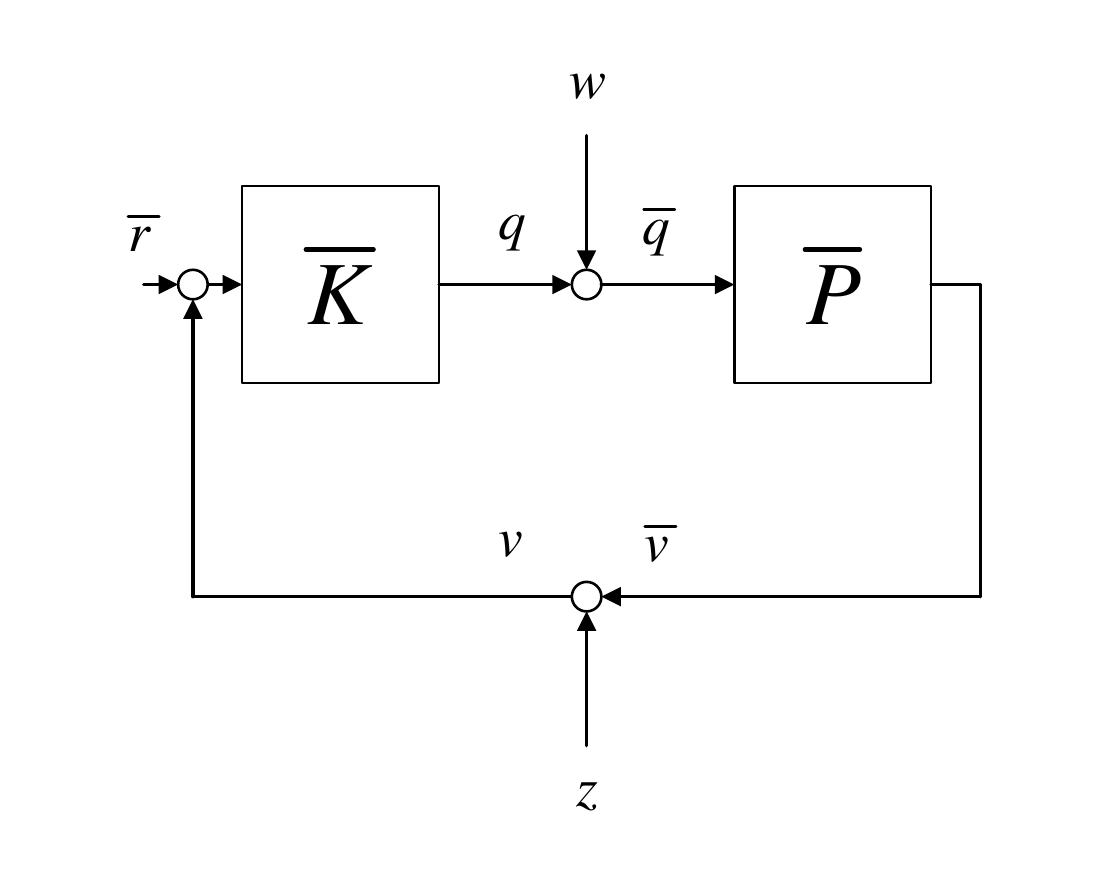}
		\vspace*{-6mm}
		\caption{A feedback system with two-way coding: the equivalent system from the attacker's viewpoint.}
		\label{injection_e2}
	\end{center}
	\vspace*{-3mm}
\end{figure}


\begin{figure}
	\vspace*{-0mm}
	\begin{center}
		\includegraphics [width=0.25\textwidth]{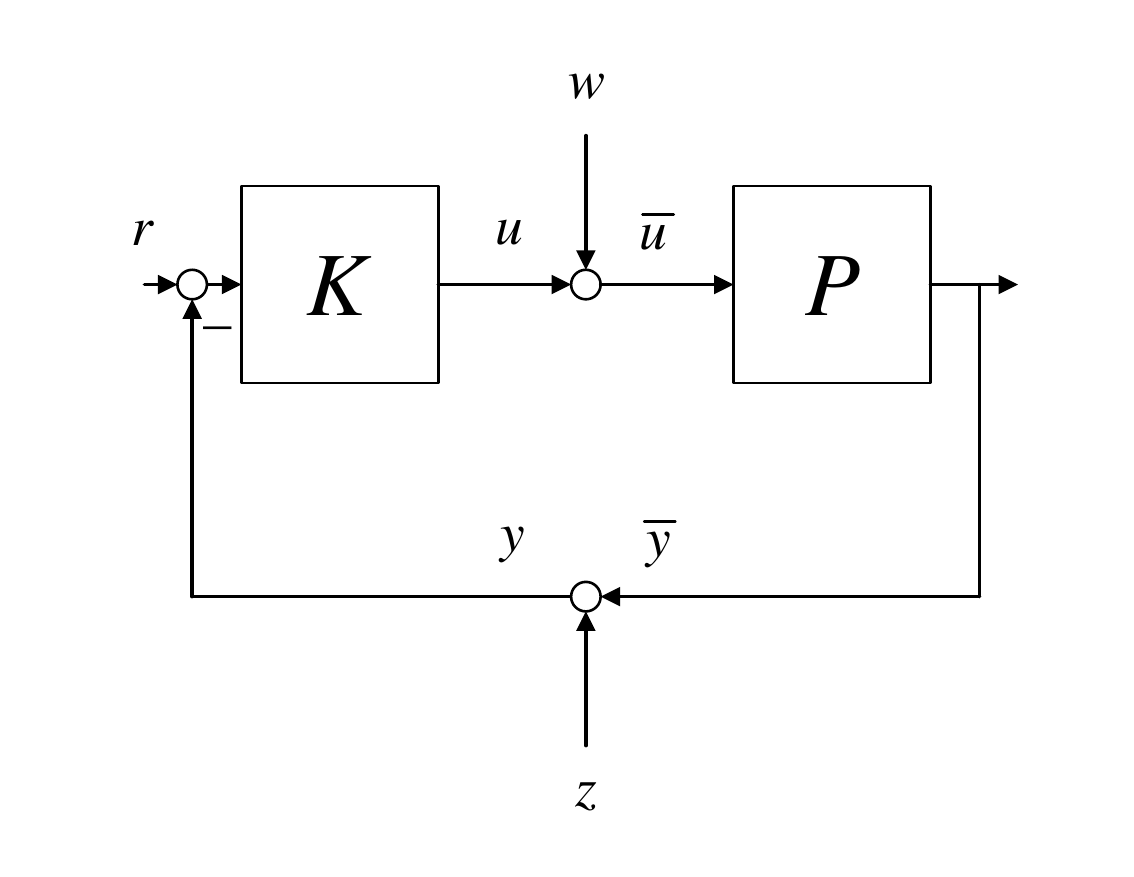}
		\vspace*{-6mm}
		\caption{The original feedback system without two-way coding.}
		\label{figure3_2}
	\end{center}
	\vspace*{-3mm}
\end{figure}


Note that based on Theorem~\ref{foundation}, Laplace transforms of all the signals flowing in the feedback system can be obtained. For instance, since $\overline{Y} \left( s \right) = P \left( s \right) \overline{U} \left( s \right)$, it follows from \eqref{injection1} that plant input $\overline{U} \left( s \right)$ is given by
\begin{flalign} 
\overline{U} \left( s \right)
&= \frac{K \left( s \right)}{1+ K \left( s \right) P \left( s \right)} R \left( s \right) 
+ \frac{ a^{-1}\left[ 1 + c K \left( s \right) \right]}{1+ K \left( s \right) P \left( s \right)} W \left( s \right)
\nonumber \\
&\ \ \ \  + \frac{ a^{-1} \left[ b - \left( ad - bc \right) K \left( s \right) \right]}{1+ K \left( s \right) P \left( s \right)} Z \left( s \right).
\end{flalign}

We now investigate the implications of Theorem~\ref{foundation}. It is clear that from the perspective of the reference, the transfer function from reference $R \left( s \right)$ to plant output $\overline{Y} \left( s \right)$, found as
\begin{flalign} \label{original}
\frac{K \left( s \right) P \left( s \right)}{1+ K \left( s \right) P \left( s \right)},
\end{flalign}
stays exactly the same as in the original system depicted in Fig.~\ref{figure3_2} where two-way coding does not exist; therein, the transfer function from reference $R \left( s \right)$ to plant output $\overline{Y} \left( s \right)$ is also given by \eqref{original}. As such, the controller $K \left( s \right)$ may be designed regardless of two-way coding.
Meanwhile, to the attacker, the feedback system behaves differently from the original system because of the presence of two-way coding, as will be shown in the following corollary.

\begin{corollary}
	From the viewpoint of the attacker (see Fig.~\ref{injection_e1}), the feedback system is equivalent to that of Fig.~\ref{injection_e2}, where the transfer function of the equivalent plant $\overline{P}$ is given by
	\begin{flalign} \label{equalP1}
	\overline{P} \left( s \right)
	= \frac{P \left( s \right) - c}{ad - bc + b P \left( s \right)},
	\end{flalign} 
	while that of the equivalent controller $\overline{K}$ is found as
	\begin{flalign} \label{equalK1}
	\overline{K} \left( s \right)
	= \frac{b - \left( ad - bc \right) K \left( s \right)}{1 + c K \left( s \right)}.
	\end{flalign} 
	In addition, the Laplace transform of the equivalent reference signal $\overline{r}$ is 
	\begin{flalign} \label{equalR}
	\overline{R} \left( s \right)
	= \frac{ a K \left( s \right)}{b - \left( a d  - b c \right) K \left( s \right)} R \left( s \right).
	\end{flalign}
\end{corollary}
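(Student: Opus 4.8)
The plan is to read off the equivalent system directly from the structural identities already obtained inside the proof of Theorem~\ref{foundation}, exploiting the fact that those identities were derived without ever invoking the attack signals and therefore capture exactly the portion of the loop that is ``intrinsic'' from the attacker's standpoint. The attacks enter only through $\overline{Q}(s) = Q(s) + W(s)$ in the forward path and $V(s) = \overline{V}(s) + Z(s)$ in the feedback path, so from the attacker's vantage point the relevant external ports of the loop are the pairs $\bigl(Q,\overline{Q}\bigr)$ and $\bigl(\overline{V},V\bigr)$; it then suffices to describe (i) the relation obtained by traversing the loop from $\overline{Q}$ through $M^{-1}$ and $P$ back to $\overline{V}$, and (ii) the relation obtained by traversing from $V$, together with $R$, through $M$ and $K$ back to $Q$.

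For part (i), identity \eqref{equalP} in the proof of Theorem~\ref{foundation} --- established purely from $\overline{Y}=P\overline{U}$, the expressions \eqref{inverse} for $M^{-1}$, and the definition of $\overline{V}$, with no attack signals involved --- already reads $\overline{V}(s) = \frac{P(s)-c}{ad-bc+bP(s)}\,\overline{Q}(s)$, which is precisely $\overline{V}(s)=\overline{P}(s)\,\overline{Q}(s)$ with $\overline{P}$ as in \eqref{equalP1}. For part (ii), identity \eqref{equalK} gives $Q(s) = \frac{aK(s)}{1+cK(s)}\,R(s) + \frac{b-(ad-bc)K(s)}{1+cK(s)}\,V(s)$; factoring out $\overline{K}(s) = \frac{b-(ad-bc)K(s)}{1+cK(s)}$ rewrites this as $Q(s) = \overline{K}(s)\bigl[\overline{R}(s) + V(s)\bigr]$ with $\overline{R}(s) = \frac{aK(s)}{b-(ad-bc)K(s)}\,R(s)$, i.e. exactly the controller $\overline{K}$ of \eqref{equalK1} driven by the reference $\overline{R}$ of \eqref{equalR} with $V$ in the feedback path. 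Composing (i) and (ii) with the two injection relations reproduces the interconnection of Fig.~\ref{injection_e2}, which is the claim.

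For full rigor I would then close the loop on this equivalent description --- substitute $\overline{Q}=Q+W$, $V=\overline{V}+Z$, $\overline{V}=\overline{P}\,\overline{Q}$, $Q=\overline{K}\bigl(\overline{R}+V\bigr)$, solve the resulting linear system for $\overline{Q}$ and $V$ in terms of $\overline{R}$, $W$, $Z$, and verify that the induced maps to $\overline{Y}=\frac{dP}{ad-bc+bP}\,\overline{Q}$ and to $Y$ coincide with \eqref{injection1}--\eqref{injection2}; this is a routine regrouping of the computation already carried out in the proof of Theorem~\ref{foundation}. The one genuinely delicate point is the sign and port bookkeeping in Fig.~\ref{injection_e2}: depending on how the summing junctions are drawn, the feedback of $V$ into $\overline{K}$ must carry the sign dictated by \eqref{equalK}, and the $W$ and $Z$ arrows must be placed so as to realize $\overline{Q}=Q+W$ and $V=\overline{V}+Z$ rather than their negatives. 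I would also note the well-posedness caveat that the factorizations above require $1+cK(s)\not\equiv0$, $b-(ad-bc)K(s)\not\equiv0$, and $ad-bc+bP(s)\not\equiv0$, which hold generically and are consistent with closed-loop stability together with condition \eqref{condition1}.
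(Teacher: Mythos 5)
Your proposal is correct and follows essentially the same route as the paper: the paper's own proof consists precisely of reading \eqref{equalP1}, \eqref{equalK1}, and \eqref{equalR} off the intermediate identities \eqref{equalP} and \eqref{equalK} established in the proof of Theorem~\ref{foundation}, exactly as you do in parts (i) and (ii). The additional loop-closing verification and well-posedness caveats you mention are sound but not part of the paper's (one-line) argument.
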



\vspace*{0mm}

\begin{proof}
	Equations \eqref{equalP1}, \eqref{equalK1}, and \eqref{equalR} follow directly from \eqref{equalP} and \eqref{equalK}.
\end{proof}

%

Clearly, the presence of two-way coding will distort the attacker's view of the control system, making the properties of the plant, controller, and reference all seemingly different from those of the original system. This distorted perspective will assist in defending the system against attacks that are designed based on the system models, as will be seen shortly in the next section.

\section{Attack Detection and Correction}

In this section, we examine how the presence of two-way coding can play a critical role in the defense against injection attacks in LTI systems. In general, the distorted perspective on the attacker side as a result of two-way coding will enable detecting the attacks or restricting what the attacker can do or even correcting the attack effect, depending on the attacker's knowledge of the system. In the particular case of zero-dynamics attacks, it is seen that the attacks will be detected if designed according to the original plant, while the attack effect will be corrected in steady state if the attacks are to be designed with respect to the equivalent plant as seen by the attacker.


Before we proceed, we first prove the following result. Consider still the SISO feedback system depicted in Fig.~\ref{injection}. Without physically changing $P \left( s \right)$, we can use two-way coding to make the zeros and/or poles of the equivalent plant $\overline{P} \left( s \right) $, as seen by the attacker, all different from those of the original plant $P \left( s \right)$.

\begin{theorem} \label{zero}
	Let
	\begin{flalign}
	P \left( s \right) = \frac{m_{P} \left( s \right)}{n_{P} \left( s \right)},
	\end{flalign}
	where $m_{P} \left( s \right)$ and $n_{P} \left( s \right)$ denote the numerator and denominator polynomials of $P \left( s \right)$, respectively. Suppose that $m_{P} \left( s \right)$ and $n_{P} \left( s \right)$ are coprime. 
	\begin{itemize}
		\item The zeros of $\overline{P} \left( s \right)$ are given by the roots of 
		\begin{flalign}
		m_{P} \left( s \right) - cn_{P} \left( s \right) = 0.
		\end{flalign}
		In addition, if $c \neq 0$, then the zeros of $\overline{P} \left( s \right)$ are all different from those of $P \left( s \right)$.
		\item The poles of $\overline{P} \left( s \right)$ are given by the roots of 
		\begin{flalign}
		\left(ad - bc\right)n_{P} \left( s \right) + b m_{P} \left( s \right) = 0.
		\end{flalign}
		In addition, if $b \neq 0$, then the poles of $\overline{P} \left( s \right)$ are all different from those of $P \left( s \right)$.
	\end{itemize}
\end{theorem}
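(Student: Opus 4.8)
The plan is to substitute the rational form $P(s)=m_{P}(s)/n_{P}(s)$ directly into the expression \eqref{equalP1} for $\overline{P}(s)$, clear the inner denominator $n_{P}(s)$, and read off a candidate numerator and denominator for $\overline{P}(s)$; the only real work is then to verify that this candidate pair is coprime, so that its roots are genuinely the zeros and poles of $\overline{P}(s)$ with no spurious pole--zero cancellation.

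First I would write
\[
\overline{P}(s)=\frac{P(s)-c}{ad-bc+bP(s)}=\frac{m_{P}(s)-c\,n_{P}(s)}{\left(ad-bc\right)n_{P}(s)+b\,m_{P}(s)},
\]
obtained by multiplying numerator and denominator by $n_{P}(s)$. Denote $\widetilde{m}(s):=m_{P}(s)-c\,n_{P}(s)$ and $\widetilde{n}(s):=\left(ad-bc\right)n_{P}(s)+b\,m_{P}(s)$.

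The key step is to show that $\widetilde{m}$ and $\widetilde{n}$ have no common root. Suppose $s_{0}$ were a common root. Then $\widetilde{m}(s_{0})=0$ gives $m_{P}(s_{0})=c\,n_{P}(s_{0})$, and substituting this into $\widetilde{n}(s_{0})=0$ yields $\left(ad-bc\right)n_{P}(s_{0})+bc\,n_{P}(s_{0})=ad\,n_{P}(s_{0})=0$. Since $ad\neq 0$ by the standing condition \eqref{condition1}, this forces $n_{P}(s_{0})=0$, hence also $m_{P}(s_{0})=c\,n_{P}(s_{0})=0$, contradicting the coprimeness of $m_{P}$ and $n_{P}$. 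Therefore the zeros of $\overline{P}(s)$ are exactly the roots of $\widetilde{m}(s)=0$ and the poles are exactly the roots of $\widetilde{n}(s)=0$, which is the first half of each bullet.

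For the two ``in addition'' claims I would reuse the same common-root device. If $c\neq 0$ and some $s_{0}$ were simultaneously a zero of $\overline{P}$ and a zero of $P$, then $m_{P}(s_{0})=0$ together with $m_{P}(s_{0})-c\,n_{P}(s_{0})=0$ forces $n_{P}(s_{0})=0$, again contradicting coprimeness; so no zero of $\overline{P}$ coincides with a zero of $P$. Symmetrically, if $b\neq 0$ and some $s_{0}$ were simultaneously a pole of $\overline{P}$ and a pole of $P$, then $n_{P}(s_{0})=0$ together with $\left(ad-bc\right)n_{P}(s_{0})+b\,m_{P}(s_{0})=0$ forces $m_{P}(s_{0})=0$, contradicting coprimeness. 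I do not anticipate a genuine obstacle here; the one point that must be stated carefully is that the coprimeness of $\widetilde{m}$ and $\widetilde{n}$ --- which hinges precisely on $ad\neq 0$ --- is what licenses calling the roots of $\widetilde{m}$ and $\widetilde{n}$ the zeros and poles of $\overline{P}$, rather than merely candidates that might cancel.
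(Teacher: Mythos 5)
Your proof is correct and follows essentially the same route as the paper's: substitute $P(s)=m_{P}(s)/n_{P}(s)$ into \eqref{equalP1}, clear $n_{P}(s)$, and rule out any coincidence of zeros (resp.\ poles) of $\overline{P}(s)$ with those of $P(s)$ by a contradiction argument resting on the coprimeness of $m_{P}$ and $n_{P}$. The one step you add that the paper leaves implicit is the check that $m_{P}(s)-c\,n_{P}(s)$ and $\left(ad-bc\right)n_{P}(s)+b\,m_{P}(s)$ are themselves coprime (which hinges on $ad\neq 0$); this is a worthwhile explicit justification that the roots of these two polynomials are exactly the zeros and poles of $\overline{P}(s)$, with no hidden cancellation.
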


\begin{proof}
	It is clear that
	\begin{flalign}
	\overline{P} \left( s \right) 
	=\frac{P \left( s \right) - c}{ad - bc + b P \left( s \right)}
	= \frac{m_{P} \left( s \right) - cn_{P} \left( s \right)}{\left(ad - bc\right)n_{P} \left( s \right) + b m_{P} \left( s \right)}. \nonumber
	\end{flalign}
	
	Note that the zeros of $P \left( s \right)$ are given by the roots of $m_{P} \left( s \right) = 0$. Let $z_{i}$ be a zero of $P \left( s \right)$. Hence, $m_{P} \left( z_{i} \right) = 0$. Then, $z_{i}$ cannot be a zero of $\overline{P} \left( s \right)$, otherwise this will lead to $n_{P} \left( z_{i} \right) = m_{P} \left( z_{i} \right) /c= 0$ and thus $z_{i}$ will be a pole of $P \left( s \right)$, which contradicts the fact that $m_{P} \left( s \right)$ and $n_{P} \left( s \right)$ are coprime. 
	
	Similarly, note that the poles of $P \left( s \right)$ are given by the roots of $n_{P} \left( s \right) = 0$. Let $p_{i}$ be a pole of $P \left( s \right)$. Therefore, $n_{P} \left( p_{i} \right) = 0$. Then, $p_{i}$ cannot be a pole of $\overline{P} \left( s \right)$, otherwise this will lead to $m_{P} \left( p_{i} \right) = \left(ad - bc\right) n_{P} \left( p_{i} \right)/b = 0$ and thus $p_{i}$ will be a zero of $P \left( s \right)$, which contradicts the fact that $m_{P} \left( s \right)$ and $n_{P} \left( s \right)$ are coprime.
\end{proof}

Note that herein the conditions $c \neq 0$ and/or $b \neq 0$ are essential.
Similarly to Theorem~\ref{zero}, it can also be shown that the zeros of $\overline{K} \left( s \right)$ are all different from those of $K \left( s \right)$ when $b \neq 0$, while the poles of $\overline{K} \left( s \right)$ are all different from those of $K \left( s \right)$ when $c \neq 0$. 


We now remark on some fundamental differences between two-way coding and one-way coding. It is clear that two-way coding reduces to two one-way coding schemes when $b=c=0$ (as in the case of two-way stretching matrix; see Section~\ref{specialtwoway}), and correspondingly, 
\begin{flalign} 
\overline{P} \left( s \right) = \frac{P \left( s \right)}{ad},~\overline{K} \left( s \right) = ad K \left( s \right).
\end{flalign} 
It is clear that the zeros and poles of $\overline{P} \left( s \right)$ and $\overline{K} \left( s \right)$ are exactly the same as those of $P \left( s \right)$ and $K \left( s \right)$.  
In other words, one-way coding will not change the zeros nor poles of the plant nor the controller. Indeed, similar results hold for multiple-input multiple-output (MIMO) systems as well.
It is also worth mentioning that even with dynamic one-way coding schemes, since cancellations between unstable poles and nonminimum-phase zeros should always be avoided to prevent possible internal instability, the nonminimum-phase zeros and unstable poles of the original plant and controller cannot be eliminated.

We next show that the presence of two-way coding not only can make the zeros and/or poles of the equivalent plant $\overline{P} \left( s \right)$ all different from those of the original plant $P \left( s \right)$, but also, under some additional conditions, may render $\overline{P} \left( s \right)$ stable and/or minimum-phase.
In fact, similar results hold for the pair of the equivalent controller $\overline{K} \left( s \right) $ and the original controller $K \left( s \right)$ as well.

\begin{theorem} \label{outputfeedback}
	Suppose that a minimal realization of the plant $P \left( s \right)$ is given by
	\begin{flalign} \label{minimal}
	\left\{ \begin{array}{rcl}
	\dot{x} \left( t \right) & = & A x \left( t \right) + B \overline{u} \left( t \right),\\
	\overline{y} \left( t \right) & = & C x \left( t \right) + D \overline{u} \left( t \right).
	\end{array} \right.
	\end{flalign}
	If the plant is stabilizable by static output feedback \cite{syrmos1997static} described as
	\begin{flalign}
	\overline{u} \left( t \right) = F \overline{y} \left( t \right), 
	\end{flalign}
	where $F \in \mathbb{R}$, then all the poles of $\overline{P} \left( s \right)$ can be made stable, while all the zeros of $\overline{P} \left( s \right)$ can be made minimum-phase. 
\end{theorem}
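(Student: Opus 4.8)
The plan is to use Theorem~\ref{zero} to turn the claim into a purely algebraic question about the entries $a,b,c,d$ of $M$, and then to match the polynomials that arise against the closed-loop characteristic polynomial of $P$ under static output feedback. Write $P(s)=m_{P}(s)/n_{P}(s)$ with $m_{P},n_{P}$ coprime and $\deg n_{P}=n$ the order of the minimal realization~\eqref{minimal}, so $\deg m_{P}\le n$. By Theorem~\ref{zero} together with \eqref{equalP1}, the poles of $\overline{P}(s)$ are the roots of $(ad-bc)n_{P}(s)+bm_{P}(s)=0$, i.e.\ (dividing by $ad-bc\neq 0$) of $n_{P}(s)+\lambda\,m_{P}(s)=0$ with $\lambda=b/(ad-bc)$, while the zeros of $\overline{P}(s)$ are the roots of $m_{P}(s)-c\,n_{P}(s)=0$, i.e.\ (for $c\neq 0$, dividing by $-c$) of $n_{P}(s)-(1/c)\,m_{P}(s)=0$. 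So the task is to choose $a,b,c,d$, subject to \eqref{condition1}, so that each of these polynomials is Hurwitz.

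Next I would record the standard fact that closing the loop around $P$ with the static gain $F$, i.e.\ $\overline{u}(t)=F\overline{y}(t)$, yields an autonomous closed loop (well posed whenever $1-DF\neq 0$) whose characteristic polynomial equals $n_{P}(s)-Fm_{P}(s)$ up to the nonzero factor $1/(1-DF)$; coprimeness of $m_{P}$ and $n_{P}$ forbids pole--zero cancellation, so the roots of $n_{P}-Fm_{P}$ are exactly the closed-loop poles. Hence ``$P$ is stabilizable by static output feedback'' is equivalent to nonemptiness of $\mathcal{F}=\{F\in\mathbb{R}:\ n_{P}(s)-Fm_{P}(s)\ \text{is Hurwitz}\}$. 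A short continuity/degree argument then shows that a nonempty $\mathcal{F}$ is in fact infinite: the degree of $n_{P}-Fm_{P}$ is constant (equal to $n$) in $F$ when $\deg m_{P}<n$, and otherwise can drop at only a single value of $F$, near which a one-sided neighbourhood still produces a Hurwitz polynomial; in particular $\mathcal{F}$ contains nonzero elements and at least two distinct elements.

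Combining the two observations finishes the proof. For the pole claim, pick any $F\in\mathcal{F}$ and set, for instance, $a=d=1$, $c=0$, $b=-F$: then \eqref{condition1} holds ($ad=ad-bc=1\neq 0$), $\lambda=-F$, and the pole polynomial of $\overline{P}$ is $n_{P}-Fm_{P}$, which is Hurwitz. For the zero claim, pick any \emph{nonzero} $F'\in\mathcal{F}$ and set $a=d=1$, $b=0$, $c=1/F'$: then \eqref{condition1} holds, and the zero polynomial of $\overline{P}$ has the same roots as $n_{P}-F'm_{P}$, which is Hurwitz, so $\overline{P}$ is minimum-phase. If one wants both simultaneously, take distinct $F,F'\in\mathcal{F}$, set $a=d=1$ and $c=1/F'$, and solve $b/(ad-bc)=-F$ to get $b=FF'/(F-F')$; a one-line check gives $ad-bc=-F'/(F-F')\neq 0$, so \eqref{condition1} holds and both polynomials come out Hurwitz.

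The argument is conceptually light once the identification of the closed-loop characteristic polynomial with $n_{P}-Fm_{P}$ is in place; what needs a little care is the bookkeeping around degenerate cases — ensuring well-posedness $1-DF\neq 0$, and, when $P$ is already stable so that $0\in\mathcal{F}$, selecting a nonzero $F'$ for the zero claim (otherwise $c=0$ and the zeros of $\overline{P}$ just coincide with those of $P$). Both are dispatched by the fact that $\mathcal{F}$ is infinite, so I expect no genuine obstacle beyond that observation and the routine verification that \eqref{condition1} can always be met while hitting the prescribed value of $b/(ad-bc)$ or of $1/c$.
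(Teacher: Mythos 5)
Your proposal is correct and follows essentially the same route as the paper: express the pole and zero polynomials of $\overline{P}(s)$ as $(ad-bc)n_{P}+bm_{P}$ and $m_{P}-cn_{P}$, and choose the coding parameters so that $b/(ad-bc)$ and $-1/c$ coincide with (sign-adjusted) stabilizing static output-feedback gains, making each polynomial a nonzero multiple of a Hurwitz closed-loop characteristic polynomial. Your additional bookkeeping—the openness/infinitude of the set of stabilizing gains (to guarantee nonzero and distinct gains), well-posedness $1-DF\neq 0$, and the explicit simultaneous choice with $F\neq F'$—only tightens points the paper asserts or relegates to the remark following its proof.
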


\begin{proof}
	Since \eqref{minimal} is a minimal realization of $P \left( s \right)$, we have
	\begin{flalign}
	P \left( s \right) = \frac{m_{P} \left( s \right)}{n_{P} \left( s \right)} = C \left(sI - A\right)^{-1} B + D. \nonumber
	\end{flalign}
	Meanwhile, since $P$ is stabilizable by static output feedback, there exists a non-zero constant $F_{1}$ such that
	\begin{flalign}
	\frac{1}{1 + F_{1} \left[ C \left(sI - A\right)^{-1} B + D \right]} 
	&= \frac{1}{1 + F_{1} \left[ \frac{m_{P} \left( s \right)}{n_{P} \left( s \right)} \right]} \nonumber \\
	&= \frac{n_{P} \left( s \right)}{n_{P} \left( s \right) + F_{1} m_{P} \left( s \right)} \nonumber
	\end{flalign}
	is stable, i.e., all its poles are stable. In other words, all the roots of $n_{P} \left( s \right) + F_{1} m_{P} \left( s \right)$ are with negative real parts. Meanwhile, note that
	\begin{flalign}
	\overline{P} \left( s \right) 
	=\frac{P \left( s \right) - c}{ad - bc + b P \left( s \right)}
	= \frac{m_{P} \left( s \right) - cn_{P} \left( s \right)}{\left(ad - bc\right)n_{P} \left( s \right) + b m_{P} \left( s \right)}. \nonumber
	\end{flalign}
	As such, when $b/\left(ad-bc\right) = F_{1}$, 
	\begin{flalign}
	\left(ad - bc\right)n_{P} \left( s \right) + b m_{P} \left( s \right)
	= \left(ad - bc\right) \left[ n_{P} \left( s \right) + F_{1} m_{P} \left( s \right) \right], \nonumber
	\end{flalign}
	and hence all its roots are with negative real parts, i.e., all the poles of $\overline{P} \left( s \right)$ are stable. Similarly, when $c = -1/F_{2}$, where $F_{2}$ is a stabilizing, non-zero static output feedback control gain, we have
	\begin{flalign}
	m_{P} \left( s \right) - cn_{P} \left( s \right) 
	= -c \left[ n_{P} \left( s \right) + F_{2} m_{P} \left( s \right) \right], \nonumber
	\end{flalign} 
	and thus all its roots are with negative real parts, i.e., all the zeros of $\overline{P} \left( s \right)$ are minimum-phase.
\end{proof}

From the proof, it can be seen that it is possible to make all the poles of $\overline{P} \left( s \right)$ stable and all the zeros of $\overline{P} \left( s \right)$ minimum-phase simultaneously, as long as $F_{1} \neq F_{2}$. 

It is also worth mentioning that herein we only require the plant to be stabilizable by static output feedback $F$, which is used merely for the purpose of deciding the parameters of two-way coding, but the controller $K \left(s\right)$ is not necessarily chosen among such static controllers; stated alternatively, the controller $K \left(s\right)$ are not further restricted.

\subsection{Zero-Dynamics Attacks} \label{zerodynamicsattacks}

We next examine the implications of Theorem~\ref{zero} and Theorem~\ref{outputfeedback} in the attack detection and correction of zero-dynamics attacks \cite{teixeira2015secure}. Consider first the original system in Fig.~\ref{figure3_2}. For zero-dynamics attacks, the typical attack design is to let $Z \left( s \right) = 0$ and 
\begin{flalign} \label{zeroattack}
W \left( s \right) = \frac{w_{0}}{s-\zeta},
\end{flalign}
where $\zeta$ is a zero of $P \left( s \right)$. It is known that if $w_{0}$ is chosen correspondingly, then the attack cannot be detected, as a consequence of the blocking property of zeros.

Consider next the system with two-way coding in Fig.~\ref{injection} where the equivalent plant from the perspective of the attacker is given by $\overline{P} \left( s \right)$. If the zero-dynamics attacks are still designed in terms of the zeros of $P \left( s \right)$, then they will easily be detected as long as $c \neq 0$, since the zeros of $\overline{P} \left( s \right)$ are all different from those of $P \left( s \right)$.  

On the other hand, if the attacker somehow knows $\overline{P} \left( s \right)$ (e.g., by carrying out system identification based on $\overline{q} \left(t\right)$ and $ \overline{v} \left(t\right)$, or by knowing $a,b,c,d$ as well as $P \left( s \right)$) and designs the zero-dynamics attacks accordingly, then the attacks cannot be detected. In this case, note that if the plant $P$ is stabilizable by static output feedback, then all the zeros of $\overline{P} \left( s \right)$ can be made minimum-phase. As a result, only stable zero-dynamics attacks are possible, meaning that the attack signal and hence the attack response will be zero in steady state; in such a case, we say that the attack effect can be corrected.

We summarize the above discussions in the following corollary.

\begin{corollary} \label{zerodynamicsattack}
	Consider the system with two-way coding in Fig.~\ref{injection} under zero-dynamics attack given by \eqref{zeroattack}.
	\begin{itemize}
		\item If the zero-dynamics attack is designed according to $P \left( s \right)$, then it can always be detected with $c \neq 0$.
		\item If the zero-dynamics attack is designed with respect to $\overline{P} \left( s \right)$, then, supposing that the plant $P$ is stabilizable by static output feedback, all the zeros of $\overline{P} \left( s \right)$ can be made minimum-phase, in which case the attack effect will be corrected in steady state. 
	\end{itemize}
\end{corollary}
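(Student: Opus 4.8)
The plan is to deduce both bullets directly from Theorem~\ref{foundation} (which gives the closed-loop response to the injections $W$ and $Z$), Theorem~\ref{zero} (zeros of $\overline{P}$ versus zeros of $P$), and Theorem~\ref{outputfeedback} (making $\overline{P}$ minimum-phase).

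For the first bullet I would work with the signal $Y(s)$ observed on the controller side, which a residual-based detector compares against its attack-free prediction $K(s)P(s)R(s)/[1+K(s)P(s)]$. Substituting $Z(s)=0$ and $W(s)=w_{0}/(s-\zeta)$, with $\zeta$ a zero of $P(s)$, into \eqref{injection2}, the attack component of $Y(s)$ is
\[
\frac{a^{-1}\left[P(s)-c\right]}{1+K(s)P(s)}\cdot\frac{w_{0}}{s-\zeta}.
\]
The key observation is that $m_{P}(\zeta)=0$ forces $P(\zeta)-c=-c$, which is nonzero exactly when $c\neq 0$; equivalently, by Theorem~\ref{zero} the zero $\zeta$ of $P$ is not a zero of $\overline{P}$. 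Hence the numerator factor $P(s)-c$ does not cancel the pole at $s=\zeta$, so -- in contrast with the original loop of Fig.~\ref{figure3_2}, where the zero of $P$ blocks this mode -- the term $e^{\zeta t}$ persists in the residual; for a nonminimum-phase $\zeta$ (the case of interest in zero-dynamics attacks) it even grows unboundedly, so the attack is detected for every $w_{0}\neq 0$. I would add the parallel remark, using \eqref{injection1}, that the same non-cancellation appears in the true plant output $\overline{Y}(s)$.

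For the second bullet the attack is redesigned as $W(s)=w_{0}/(s-\overline{\zeta})$, $Z(s)=0$, with $\overline{\zeta}$ now a zero of $\overline{P}(s)$; since $\overline{\zeta}$ then satisfies $m_{P}(\overline{\zeta})-c\,n_{P}(\overline{\zeta})=0$, the factor $P(s)-c$ in \eqref{injection2} does cancel the pole at $\overline{\zeta}$, so this attack is stealthy. The correction claim follows in three steps. First, apply Theorem~\ref{outputfeedback}: since $P$ is stabilizable by a nonzero static output-feedback gain $F_{2}$, choose $c=-1/F_{2}$, so that $m_{P}(s)-c\,n_{P}(s)=-c\,[\,n_{P}(s)+F_{2}\,m_{P}(s)\,]$ is Hurwitz, hence every zero $\overline{\zeta}$ of $\overline{P}(s)$ satisfies $\mathrm{Re}\,\overline{\zeta}<0$. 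Second, therefore the attack signal itself, $w(t)=w_{0}e^{\overline{\zeta}t}$, tends to $0$ as $t\to\infty$. Third, read off from \eqref{injection1} that the attack's contribution to $\overline{Y}(s)$ equals $a^{-1}\left[1+cK(s)\right]P(s)/[1+K(s)P(s)]$ times $W(s)$, a transfer function that is stable (by the closed-loop stability hypothesis) and is driven by the decaying input $w$; hence $\overline{y}$ asymptotically returns to its attack-free value, i.e.\ the attack effect is corrected in steady state, and likewise for the remaining loop signals, whose transfer functions from $W$ are all stable.

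I expect the only real obstacle to be the bookkeeping that separates what is genuinely proved from what is an interpretation convention: making precise that ``detected'' means the attack-free residual no longer vanishes -- so the statement is sharpest when $\zeta$ lies in the closed right half-plane -- and that ``corrected in steady state'' means every closed-loop signal converges to its attack-free limit. One should also flag the degenerate configuration in which $\zeta$ is simultaneously a zero of $P$ and a pole of $K$: there the factor $n_{K}(s)$ that enters the numerator of the $W$-to-$Y$ map would itself cancel the pole at $\zeta$, so this case must be excluded, which is automatic whenever $K$ has no pole at the attacked zero. Everything else is a direct substitution into results already established above.
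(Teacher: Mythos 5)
Your proposal is correct and follows essentially the same route as the paper, which ``proves'' the corollary by the informal discussion preceding it: Theorem~\ref{zero} (with $c \neq 0$ the zeros of $\overline{P}$ differ from those of $P$, so an attack at a zero of $P$ is no longer blocked and is detected) plus Theorem~\ref{outputfeedback} (choosing $c=-1/F_{2}$ makes the zeros of $\overline{P}$ minimum-phase, so any stealthy attack signal decays and, through the stable closed loop, its effect vanishes in steady state). Your explicit substitution of \eqref{zeroattack} into \eqref{injection2}, and your caveat about the degenerate case where the attacked zero coincides with a pole of $K$ (excluded automatically for right-half-plane zeros by closed-loop stability), merely make precise what the paper leaves implicit.
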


Note also that for zero-dynamics attacks, the attacker may instead choose to let $W \left( s \right) = 0$ and 
\begin{flalign} \label{zeroattack2}
Z \left( s \right) = \frac{z_{0}}{s-\lambda},
\end{flalign}
where $\lambda$ is a pole of $P \left( s \right)$ (and hence a zero of the closed-loop system from $z \left(t\right)$ to plant output  $\overline{y} \left(t\right)$). If $z_{0}$ is chosen correspondingly, then the attack cannot be detected. Similarly, in the system with two-way coding in Fig.~\ref{injection}, if the zero-dynamics attacks are still designed in terms of the poles of $P \left( s \right)$, they will easily be detected as long as $b \neq 0$, since the poles of $\overline{P} \left( s \right)$ are all different from those of $P \left( s \right)$. On the other hand, if the attacker knows $\overline{P} \left( s \right)$ and designs the zero-dynamics attacks accordingly, then the attacks cannot be detected. In this case, note that if the plant $P$ is stabilizable by static output feedback, then all the poles of $\overline{P} \left( s \right)$ can be made stable. As a consequence, only stable zero-dynamics attacks are possible, meaning that the attack effect will be zero in steady state; in such a situation, the attack effect is said to be corrected. 

Similarly, we summarize the previous discussions in the corollary below.

\begin{corollary} \label{zerodynamicsattack2}
	Consider the system with two-way coding in Fig.~\ref{injection} under zero-dynamics attack given by \eqref{zeroattack2}.
	\begin{itemize}
		\item If the zero-dynamics attack is designed according to $P \left( s \right)$, then it can always be detected with $b \neq 0$.
		\item If the zero-dynamics attack is designed with respect to $\overline{P} \left( s \right)$, then, supposing that the plant $P$ is stabilizable by static output feedback, all the poles of $\overline{P} \left( s \right)$ can be made stable, in which case the attack effect will be corrected in steady state. 
	\end{itemize}
\end{corollary}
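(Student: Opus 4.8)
The plan is to deduce both items from Theorem~\ref{zero} and Theorem~\ref{outputfeedback}, using the transfer functions of Theorem~\ref{foundation} to pin down what ``detected'' and ``corrected'' mean. Throughout I set $R(s)=W(s)=0$, so that only the feedback-path attack \eqref{zeroattack2} acts, and write $P(s)=m_P(s)/n_P(s)$ and $K(s)=m_K(s)/n_K(s)$ in coprime form. Clearing denominators in \eqref{injection2} and \eqref{injection1} gives
\[
\begin{aligned}
Y(s) &= \frac{a^{-1}n_K(s)\bigl[(ad-bc)n_P(s)+bm_P(s)\bigr]}{n_K(s)n_P(s)+m_K(s)m_P(s)}\,Z(s),\\
\overline{Y}(s) &= \frac{a^{-1}\bigl[bn_K(s)-(ad-bc)m_K(s)\bigr]m_P(s)}{n_K(s)n_P(s)+m_K(s)m_P(s)}\,Z(s),
\end{aligned}
\]
and, the closed loop being stable, $n_Kn_P+m_Km_P$ is Hurwitz, so every transfer function from $z$ to a closed-loop signal is BIBO stable. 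The defender monitors the controller-side loop output $y(t)$; by Proposition~\ref{cancel} its attack-free value is the nominal output, so the residual is exactly the $Z$-driven term above. I will call the attack ``detected'' when that term is not a decaying transient, and ``undetectable'' when it is.

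For the first item, let $\lambda$ be a pole of $P(s)$, i.e.\ $n_P(\lambda)=0$, and --- as is needed for a zero-dynamics attack to do any harm --- let $\lambda$ lie in the open right half-plane, so that the Hurwitz polynomial $n_Kn_P+m_Km_P$ does not vanish at $\lambda$ and consequently $m_P(\lambda)\neq 0$. Evaluating the numerator of the $Z\!\to\!Y$ transfer function at $s=\lambda$ gives $a^{-1}b\,n_K(\lambda)m_P(\lambda)$, which is nonzero whenever $b\neq 0$ (and, generically, $n_K(\lambda)\neq 0$, i.e.\ $\lambda$ is not a pole of $K$); equivalently, by Theorem~\ref{zero}, $\lambda$ is not a pole of the equivalent plant $\overline{P}(s)$, and hence cannot act as a blocking zero for the loop the attacker effectively faces. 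Thus the factor $s-\lambda$ in $Z(s)=z_0/(s-\lambda)$ is not cancelled, $y(t)$ inherits the unstable mode $e^{\lambda t}$, the residual diverges, and the attack is detected --- for every $z_0\neq 0$.

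For the second item, suppose the attacker designs the attack with respect to $\overline{P}(s)$, i.e.\ picks $\lambda$ to be a pole of $\overline{P}(s)$, so that $(ad-bc)n_P(\lambda)+bm_P(\lambda)=0$. Then $s-\lambda$ divides the numerator of the $Z\!\to\!Y$ transfer function and cancels the pole of $Z(s)$, so $Y(s)$ equals $z_0$ times a proper rational function with Hurwitz denominator; hence $y(t)$ is a sum of decaying modes and tends to zero, and the attack is undetectable. Now invoke Theorem~\ref{outputfeedback}: since $P$ is stabilizable by static output feedback, we may fix the two-way-coding parameters so that $b/(ad-bc)=F_1$ for a stabilizing gain $F_1$, which places every root of $(ad-bc)n_P(s)+bm_P(s)$ --- hence every pole of $\overline{P}(s)$, and in particular $\lambda$ --- in the open left half-plane. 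Then $z(t)=z_0 e^{\lambda t}\to 0$ as $t\to\infty$, and since $z$ enters $\overline{y}(t)$ (and every other signal) through a BIBO-stable transfer function and is itself decaying, the attack's contribution to all signals vanishes as $t\to\infty$: the attack effect is corrected in steady state.

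The individual steps are routine; the point I would flag is the gap between ``undetectable'' and ``harmless.'' As the formula for $\overline{Y}(s)$ shows, picking $\lambda$ to be a pole of $\overline{P}(s)$ blocks the residual but generically leaves the mode $e^{\lambda t}$ in the true plant output, so without the stabilization step an \emph{unstable} such $\lambda$ would still wreck the plant while remaining stealthy; the role of Theorem~\ref{outputfeedback} is exactly to deny the attacker any unstable $\lambda$ that is also a pole of $\overline{P}(s)$, converting an otherwise damaging stealthy attack into a vanishing transient. I would also note the genericity caveat ($n_K(\lambda)\neq 0$, and $\lambda$ not a pole of the equivalent controller $\overline{K}(s)$) under which ``all different'' in Theorem~\ref{zero} translates into ``detected here,'' the remaining degenerate alignments being covered by the controller-side analogue of Theorem~\ref{zero} mentioned after its proof.
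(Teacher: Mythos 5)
Your proof is correct and follows essentially the paper's own route: the corollary is read off from Theorem~\ref{zero} (poles of $\overline{P}\left(s\right)$ all differ from those of $P\left(s\right)$ when $b \neq 0$) and Theorem~\ref{outputfeedback} (static-output-feedback stabilizability lets the coding parameters place all poles of $\overline{P}\left(s\right)$ in the open left half-plane), combined with the blocking/cancellation reasoning that you simply make explicit through the $Z \to Y$ and $Z \to \overline{Y}$ transfer functions obtained from Theorem~\ref{foundation}. If anything, your version is slightly more careful than the paper's informal discussion preceding the corollary --- you flag the degenerate alignment $n_{K}\left(\lambda\right) = 0$ (note it is a pole of $K\left(s\right)$ itself, not of $\overline{K}\left(s\right)$, that enters that numerator) and you spell out that stealthiness alone still leaves the mode $\mathrm{e}^{\lambda t}$ in $\overline{y}$, which is precisely why the stabilization step of Theorem~\ref{outputfeedback} is what converts the attack into a vanishing transient.
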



When the zero-dynamics attacks \eqref{zeroattack} and \eqref{zeroattack2} happen simultaneously, it is clear that Corollary~\ref{zerodynamicsattack} and Corollary~\ref{zerodynamicsattack2} apply respectively to the two attacks.


It might also be interesting to examine what changes two-way coding can bring to the detection and correction of other classes of injection attacks; see, e.g., \cite{pasqualetti2015control}. We will, however, leave those investigations to future research.

\section{Conclusions}

We have introduced the method of two-way coding into feedback control systems under injection attacks. We have shown that the presence of two-way coding can distort the perspective of the attacker on the control system; this distorted view on the attacker side was demonstrated to facilitate detecting the attacks, or restricting what the attacker can do, or even correcting the attack effect in steady state. 
Future research directions include the analysis of MIMO systems, discrete-time systems, as well as other classes of attacks in the presence of two-way coding.  


\appendix
%
%
\begin{acks}
  The work is supported by the \grantsponsor{kth1}{Knut and Alice Wallenberg Foundation}{}, the \grantsponsor{kth1}{Swedish Strategic Research Foundation}{}, the \grantsponsor{kth1}{Swedish Research Council}{}, \grantsponsor{kth1}{the Swedish Civil Contingencies Agency (CERCES project)}{}, the \grantsponsor{tit1}{JSPS}{} under Grant-in-Aid for Scientific Research Grant No.:~\grantnum{tit1}{15H04020}, and the \grantsponsor{tit2}{JST CREST}{} under Grant No.:~\grantnum{tit2}{JPMJCR15K3}.


\end{acks}

\bibliographystyle{ACM-Reference-Format}
\bibliography{sample-bibliography}


\begin{thebibliography}{32}


\ifx \showCODEN    \undefined \def \showCODEN     #1{\unskip}     \fi
\ifx \showDOI      \undefined \def \showDOI       #1{#1}\fi
\ifx \showISBNx    \undefined \def \showISBNx     #1{\unskip}     \fi
\ifx \showISBNxiii \undefined \def \showISBNxiii  #1{\unskip}     \fi
\ifx \showISSN     \undefined \def \showISSN      #1{\unskip}     \fi
\ifx \showLCCN     \undefined \def \showLCCN      #1{\unskip}     \fi
\ifx \shownote     \undefined \def \shownote      #1{#1}          \fi
\ifx \showarticletitle \undefined \def \showarticletitle #1{#1}   \fi
\ifx \showURL      \undefined \def \showURL       {\relax}        \fi
\providecommand\bibfield[2]{#2}
\providecommand\bibinfo[2]{#2}
\providecommand\natexlab[1]{#1}
\providecommand\showeprint[2][]{arXiv:#2}

\bibitem[\protect\citeauthoryear{Amin, Schwartz, C\'ardenas, and Sastry}{Amin
  et~al\mbox{.}}{2015}]%
        {amin2015game}
\bibfield{author}{\bibinfo{person}{Saurabh Amin}, \bibinfo{person}{Galina~A.
  Schwartz}, \bibinfo{person}{Alvaro~A. C\'ardenas}, {and}
  \bibinfo{person}{S.~Shankar Sastry}.} \bibinfo{year}{2015}\natexlab{}.
\newblock \showarticletitle{Game-theoretic models of electricity theft
  detection in smart utility networks: Providing new capabilities with advanced
  metering infrastructure}.
\newblock \bibinfo{journal}{\emph{IEEE Control Systems Magazine}}
  \bibinfo{volume}{35}, \bibinfo{number}{1} (\bibinfo{year}{2015}),
  \bibinfo{pages}{66--81}.
\newblock


\bibitem[\protect\citeauthoryear{Anderson and Spong}{Anderson and
  Spong}{1989}]%
        {anderson1989bilateral}
\bibfield{author}{\bibinfo{person}{Robert~J. Anderson} {and}
  \bibinfo{person}{Mark~W. Spong}.} \bibinfo{year}{1989}\natexlab{}.
\newblock \showarticletitle{Bilateral control of teleoperators with time
  delay}.
\newblock \bibinfo{journal}{\emph{IEEE Trans. Automat. Control}}
  \bibinfo{volume}{34}, \bibinfo{number}{5} (\bibinfo{year}{1989}),
  \bibinfo{pages}{494--501}.
\newblock


\bibitem[\protect\citeauthoryear{Chaaban and Sezgin}{Chaaban and
  Sezgin}{2015}]%
        {chaaban2015multi}
\bibfield{author}{\bibinfo{person}{Anas Chaaban} {and} \bibinfo{person}{Aydin
  Sezgin}.} \bibinfo{year}{2015}\natexlab{}.
\newblock \showarticletitle{Multi-way communications: An information theoretic
  perspective}.
\newblock \bibinfo{journal}{\emph{Foundations and Trends{\textregistered} in
  Communications and Information Theory}} \bibinfo{volume}{12},
  \bibinfo{number}{3-4} (\bibinfo{year}{2015}), \bibinfo{pages}{185--371}.
\newblock


\bibitem[\protect\citeauthoryear{Cheng, Shi, and Sinopoli}{Cheng
  et~al\mbox{.}}{2017}]%
        {cheng2017guest}
\bibfield{author}{\bibinfo{person}{Peng Cheng}, \bibinfo{person}{Ling Shi},
  {and} \bibinfo{person}{Bruno Sinopoli}.} \bibinfo{year}{2017}\natexlab{}.
\newblock \showarticletitle{Guest editorial special issue on secure control of
  cyber-physical systems}.
\newblock \bibinfo{journal}{\emph{IEEE Transactions on Control of Network
  Systems}} \bibinfo{volume}{4}, \bibinfo{number}{1} (\bibinfo{year}{2017}),
  \bibinfo{pages}{1--3}.
\newblock


\bibitem[\protect\citeauthoryear{der Meulen}{der Meulen}{1977}]%
        {van1977survey}
\bibfield{author}{\bibinfo{person}{Edward C.~Van der Meulen}.}
  \bibinfo{year}{1977}\natexlab{}.
\newblock \showarticletitle{A survey of multi-way channels in information
  theory: 1961-1976}.
\newblock \bibinfo{journal}{\emph{IEEE Transactions on Information Theory}}
  \bibinfo{volume}{23}, \bibinfo{number}{1} (\bibinfo{year}{1977}),
  \bibinfo{pages}{1--37}.
\newblock


\bibitem[\protect\citeauthoryear{Ferrari and Teixeira}{Ferrari and
  Teixeira}{2017a}]%
        {ferrari2017detection2}
\bibfield{author}{\bibinfo{person}{Riccardo~M.G. Ferrari} {and}
  \bibinfo{person}{Andr{\'e}~M.H. Teixeira}.} \bibinfo{year}{2017}\natexlab{a}.
\newblock \showarticletitle{Detection and isolation of replay attacks through
  sensor watermarking}.
\newblock \bibinfo{journal}{\emph{IFAC-PapersOnLine}} \bibinfo{volume}{50},
  \bibinfo{number}{1} (\bibinfo{year}{2017}), \bibinfo{pages}{7363--7368}.
\newblock


\bibitem[\protect\citeauthoryear{Ferrari and Teixeira}{Ferrari and
  Teixeira}{2017b}]%
        {ferrari2017detection1}
\bibfield{author}{\bibinfo{person}{Riccardo~M.G. Ferrari} {and}
  \bibinfo{person}{Andr{\'e}~M.H. Teixeira}.} \bibinfo{year}{2017}\natexlab{b}.
\newblock \showarticletitle{Detection and isolation of routing attacks through
  sensor watermarking}. In \bibinfo{booktitle}{\emph{Proceedings of the
  American Control Conference}}. \bibinfo{pages}{5436--5442}.
\newblock


\bibitem[\protect\citeauthoryear{Giraldo, Urbina, Cardenas, Valente, Faisal,
  Ruths, Tippenhauer, Sandberg, and Candell}{Giraldo et~al\mbox{.}}{2018}]%
        {giraldo2018survey}
\bibfield{author}{\bibinfo{person}{Jairo Giraldo}, \bibinfo{person}{David
  Urbina}, \bibinfo{person}{Alvaro Cardenas}, \bibinfo{person}{Junia Valente},
  \bibinfo{person}{Mustafa Faisal}, \bibinfo{person}{Justin Ruths},
  \bibinfo{person}{Nils~Ole Tippenhauer}, \bibinfo{person}{Henrik Sandberg},
  {and} \bibinfo{person}{Richard Candell}.} \bibinfo{year}{2018}\natexlab{}.
\newblock \showarticletitle{A Survey of Physics-Based Attack Detection in
  Cyber-Physical Systems}.
\newblock \bibinfo{journal}{\emph{ACM Computing Surveys (CSUR)}}
  \bibinfo{volume}{51}, \bibinfo{number}{4} (\bibinfo{year}{2018}),
  \bibinfo{pages}{76}.
\newblock


\bibitem[\protect\citeauthoryear{Gu and Qiu}{Gu and Qiu}{2011}]%
        {gu2011two}
\bibfield{author}{\bibinfo{person}{Guoxiang Gu} {and} \bibinfo{person}{Li
  Qiu}.} \bibinfo{year}{2011}\natexlab{}.
\newblock \showarticletitle{A two-port approach to networked feedback
  stabilization}. In \bibinfo{booktitle}{\emph{Proceedings of the IEEE
  Conference on Decision and Control and European Control Conference}}.
  \bibinfo{pages}{2387--2392}.
\newblock


\bibitem[\protect\citeauthoryear{Hatanaka, Chopra, Fujita, and Spong}{Hatanaka
  et~al\mbox{.}}{2015}]%
        {hatanaka2015passivity}
\bibfield{author}{\bibinfo{person}{Takeshi Hatanaka}, \bibinfo{person}{Nikhil
  Chopra}, \bibinfo{person}{Masayuki Fujita}, {and} \bibinfo{person}{Mark~W.
  Spong}.} \bibinfo{year}{2015}\natexlab{}.
\newblock \bibinfo{booktitle}{\emph{Passivity-based control and estimation in
  networked robotics}}.
\newblock \bibinfo{publisher}{Springer}.
\newblock


\bibitem[\protect\citeauthoryear{Hirche and Buss}{Hirche and Buss}{2012}]%
        {hirche2012human}
\bibfield{author}{\bibinfo{person}{Sandra Hirche} {and} \bibinfo{person}{Martin
  Buss}.} \bibinfo{year}{2012}\natexlab{}.
\newblock \showarticletitle{Human-oriented control for haptic teleoperation}.
\newblock \bibinfo{journal}{\emph{Proc. IEEE}} (\bibinfo{year}{2012}).
\newblock


\bibitem[\protect\citeauthoryear{Hoehn and Zhang}{Hoehn and Zhang}{2016}]%
        {hoehn2016detection}
\bibfield{author}{\bibinfo{person}{Andreas Hoehn} {and} \bibinfo{person}{Ping
  Zhang}.} \bibinfo{year}{2016}\natexlab{}.
\newblock \showarticletitle{Detection of covert attacks and zero dynamics
  attacks in cyber-physical systems}. In \bibinfo{booktitle}{\emph{Proceedings
  of the American Control Conference}}. \bibinfo{pages}{302--307}.
\newblock


\bibitem[\protect\citeauthoryear{Hokayem and Spong}{Hokayem and Spong}{2006}]%
        {hokayem2006bilateral}
\bibfield{author}{\bibinfo{person}{Peter~F. Hokayem} {and}
  \bibinfo{person}{Mark~W. Spong}.} \bibinfo{year}{2006}\natexlab{}.
\newblock \showarticletitle{Bilateral teleoperation: An historical survey}.
\newblock \bibinfo{journal}{\emph{Automatica}} \bibinfo{volume}{42},
  \bibinfo{number}{12} (\bibinfo{year}{2006}), \bibinfo{pages}{2035--2057}.
\newblock


\bibitem[\protect\citeauthoryear{Hughes, Van~Dam, Foley, McGuire, Feiner,
  Sklar, and Akeley}{Hughes et~al\mbox{.}}{2014}]%
        {hughes2014computer}
\bibfield{author}{\bibinfo{person}{John~F Hughes}, \bibinfo{person}{Andries
  Van~Dam}, \bibinfo{person}{James~D. Foley}, \bibinfo{person}{Morgan McGuire},
  \bibinfo{person}{Steven~K. Feiner}, \bibinfo{person}{David~F. Sklar}, {and}
  \bibinfo{person}{Kurt Akeley}.} \bibinfo{year}{2014}\natexlab{}.
\newblock \bibinfo{booktitle}{\emph{Computer Graphics: Principles and
  Practice}}.
\newblock \bibinfo{publisher}{Pearson}.
\newblock


\bibitem[\protect\citeauthoryear{Johansson, Pappas, Tabuada, and
  Tomlin}{Johansson et~al\mbox{.}}{2014}]%
        {johansson2014guest}
\bibfield{author}{\bibinfo{person}{Karl~H. Johansson},
  \bibinfo{person}{George~J. Pappas}, \bibinfo{person}{Paulo Tabuada}, {and}
  \bibinfo{person}{Claire~J. Tomlin}.} \bibinfo{year}{2014}\natexlab{}.
\newblock \showarticletitle{Guest editorial special issue on control of
  cyber-physical systems}.
\newblock \bibinfo{journal}{\emph{IEEE Trans. Automat. Control}}
  \bibinfo{volume}{59}, \bibinfo{number}{12} (\bibinfo{year}{2014}),
  \bibinfo{pages}{3120--3121}.
\newblock


\bibitem[\protect\citeauthoryear{Kailath, Sayed, and Hassibi}{Kailath
  et~al\mbox{.}}{2000}]%
        {kailath2000linear}
\bibfield{author}{\bibinfo{person}{Thomas Kailath}, \bibinfo{person}{Ali~H.
  Sayed}, {and} \bibinfo{person}{Babak Hassibi}.}
  \bibinfo{year}{2000}\natexlab{}.
\newblock \bibinfo{booktitle}{\emph{Linear Estimation}}.
\newblock \bibinfo{publisher}{Prentice Hall}.
\newblock


\bibitem[\protect\citeauthoryear{Kimura}{Kimura}{1996}]%
        {kimura1996chain}
\bibfield{author}{\bibinfo{person}{Hidenori Kimura}.}
  \bibinfo{year}{1996}\natexlab{}.
\newblock \bibinfo{booktitle}{\emph{Chain-scattering approach to $H_{\infty}$
  control}}.
\newblock \bibinfo{publisher}{Springer}.
\newblock


\bibitem[\protect\citeauthoryear{Meeuwissen}{Meeuwissen}{1998}]%
        {meeuwissen1998information}
\bibfield{author}{\bibinfo{person}{Hendrik~B. Meeuwissen}.}
  \bibinfo{year}{1998}\natexlab{}.
\newblock \bibinfo{booktitle}{\emph{Information theoretical aspects of two-way
  communication}}.
\newblock \bibinfo{publisher}{Technische Universiteit Eindhoven}.
\newblock


\bibitem[\protect\citeauthoryear{Miao, Zhu, Pajic, and Pappas}{Miao
  et~al\mbox{.}}{2017}]%
        {miao2017coding}
\bibfield{author}{\bibinfo{person}{Fei Miao}, \bibinfo{person}{Quanyan Zhu},
  \bibinfo{person}{Miroslav Pajic}, {and} \bibinfo{person}{George~J. Pappas}.}
  \bibinfo{year}{2017}\natexlab{}.
\newblock \showarticletitle{Coding schemes for securing cyber-physical systems
  against stealthy data injection attacks}.
\newblock \bibinfo{journal}{\emph{IEEE Transactions on Control of Network
  Systems}} \bibinfo{volume}{4}, \bibinfo{number}{1} (\bibinfo{year}{2017}),
  \bibinfo{pages}{106--117}.
\newblock


\bibitem[\protect\citeauthoryear{Mo, Weerakkody, and Sinopoli}{Mo
  et~al\mbox{.}}{2015}]%
        {mo2015physical}
\bibfield{author}{\bibinfo{person}{Yilin Mo}, \bibinfo{person}{Sean
  Weerakkody}, {and} \bibinfo{person}{Bruno Sinopoli}.}
  \bibinfo{year}{2015}\natexlab{}.
\newblock \showarticletitle{Physical authentication of control systems:
  Designing watermarked control inputs to detect counterfeit sensor outputs}.
\newblock \bibinfo{journal}{\emph{IEEE Control Systems Magazine}}
  \bibinfo{volume}{35}, \bibinfo{number}{1} (\bibinfo{year}{2015}),
  \bibinfo{pages}{93--109}.
\newblock


\bibitem[\protect\citeauthoryear{Niemeyer and Slotine}{Niemeyer and
  Slotine}{1991}]%
        {niemeyer1991stable}
\bibfield{author}{\bibinfo{person}{G{\"u}nter Niemeyer} {and}
  \bibinfo{person}{Jean-Jacques~E. Slotine}.} \bibinfo{year}{1991}\natexlab{}.
\newblock \showarticletitle{Stable adaptive teleoperation}.
\newblock \bibinfo{journal}{\emph{IEEE Journal of Oceanic Engineering}}
  \bibinfo{volume}{16}, \bibinfo{number}{1} (\bibinfo{year}{1991}),
  \bibinfo{pages}{152--162}.
\newblock


\bibitem[\protect\citeauthoryear{Nu{\~n}o, Basa{\~n}ez, and Ortega}{Nu{\~n}o
  et~al\mbox{.}}{2011}]%
        {nuno2011passivity}
\bibfield{author}{\bibinfo{person}{Emmanuel Nu{\~n}o}, \bibinfo{person}{Luis
  Basa{\~n}ez}, {and} \bibinfo{person}{Romeo Ortega}.}
  \bibinfo{year}{2011}\natexlab{}.
\newblock \showarticletitle{Passivity-based control for bilateral
  teleoperation: A tutorial}.
\newblock \bibinfo{journal}{\emph{Automatica}} \bibinfo{volume}{47},
  \bibinfo{number}{3} (\bibinfo{year}{2011}), \bibinfo{pages}{485--495}.
\newblock


\bibitem[\protect\citeauthoryear{Pasqualetti, Dorfler, and Bullo}{Pasqualetti
  et~al\mbox{.}}{2015}]%
        {pasqualetti2015control}
\bibfield{author}{\bibinfo{person}{Fabio Pasqualetti}, \bibinfo{person}{Florian
  Dorfler}, {and} \bibinfo{person}{Francesco Bullo}.}
  \bibinfo{year}{2015}\natexlab{}.
\newblock \showarticletitle{Control-theoretic methods for cyberphysical
  security: Geometric principles for optimal cross-layer resilient control
  systems}.
\newblock \bibinfo{journal}{\emph{IEEE Control Systems Magazine}}
  \bibinfo{volume}{35}, \bibinfo{number}{1} (\bibinfo{year}{2015}),
  \bibinfo{pages}{110--127}.
\newblock


\bibitem[\protect\citeauthoryear{Poovendran, Sampigethaya, Gupta, Lee, Prasad,
  Corman, and Paunicka}{Poovendran et~al\mbox{.}}{2012}]%
        {poovendran2012special}
\bibfield{author}{\bibinfo{person}{Radha Poovendran}, \bibinfo{person}{Krishna
  Sampigethaya}, \bibinfo{person}{Sandeep Kumar~S. Gupta},
  \bibinfo{person}{Insup Lee}, \bibinfo{person}{K.~Venkatesh Prasad},
  \bibinfo{person}{David Corman}, {and} \bibinfo{person}{James~L. Paunicka}.}
  \bibinfo{year}{2012}\natexlab{}.
\newblock \showarticletitle{Special issue on cyber-physical systems [scanning
  the issue]}.
\newblock \bibinfo{journal}{\emph{Proc. IEEE}} \bibinfo{volume}{100},
  \bibinfo{number}{1} (\bibinfo{year}{2012}), \bibinfo{pages}{6--12}.
\newblock


\bibitem[\protect\citeauthoryear{Sandberg, Amin, and Johansson}{Sandberg
  et~al\mbox{.}}{2015}]%
        {sandberg2015cyberphysical}
\bibfield{author}{\bibinfo{person}{Henrik Sandberg}, \bibinfo{person}{Saurabh
  Amin}, {and} \bibinfo{person}{Karl~H. Johansson}.}
  \bibinfo{year}{2015}\natexlab{}.
\newblock \showarticletitle{Cyberphysical security in networked control
  systems: An introduction to the issue}.
\newblock \bibinfo{journal}{\emph{IEEE Control Systems Magazine}}
  \bibinfo{volume}{35}, \bibinfo{number}{1} (\bibinfo{year}{2015}),
  \bibinfo{pages}{20--23}.
\newblock


\bibitem[\protect\citeauthoryear{Shannon}{Shannon}{1961}]%
        {shannon1961two}
\bibfield{author}{\bibinfo{person}{Claude~E. Shannon}.}
  \bibinfo{year}{1961}\natexlab{}.
\newblock \showarticletitle{Two-way communication channels}. In
  \bibinfo{booktitle}{\emph{Proceedings of the Fourth Berkeley Symposium on
  Mathematical Statistics and Probability}}.
\newblock


\bibitem[\protect\citeauthoryear{Smith}{Smith}{2015}]%
        {smith2015covert}
\bibfield{author}{\bibinfo{person}{Roy~S. Smith}.}
  \bibinfo{year}{2015}\natexlab{}.
\newblock \showarticletitle{Covert misappropriation of networked control
  systems: Presenting a feedback structure}.
\newblock \bibinfo{journal}{\emph{IEEE Control Systems Magazine}}
  \bibinfo{volume}{35}, \bibinfo{number}{1} (\bibinfo{year}{2015}),
  \bibinfo{pages}{82--92}.
\newblock


\bibitem[\protect\citeauthoryear{Syrmos, Abdallah, Dorato, and
  Grigoriadis}{Syrmos et~al\mbox{.}}{1997}]%
        {syrmos1997static}
\bibfield{author}{\bibinfo{person}{Vassilis~L. Syrmos},
  \bibinfo{person}{Chaouki~T. Abdallah}, \bibinfo{person}{Peter Dorato}, {and}
  \bibinfo{person}{Karolos Grigoriadis}.} \bibinfo{year}{1997}\natexlab{}.
\newblock \showarticletitle{Static output feedback: A survey}.
\newblock \bibinfo{journal}{\emph{Automatica}} \bibinfo{volume}{33},
  \bibinfo{number}{2} (\bibinfo{year}{1997}), \bibinfo{pages}{125--137}.
\newblock


\bibitem[\protect\citeauthoryear{Teixeira, Sou, Sandberg, and
  Johansson}{Teixeira et~al\mbox{.}}{2015}]%
        {teixeira2015secure}
\bibfield{author}{\bibinfo{person}{Andre Teixeira}, \bibinfo{person}{Kin~C.
  Sou}, \bibinfo{person}{Henrik Sandberg}, {and} \bibinfo{person}{Karl~H.
  Johansson}.} \bibinfo{year}{2015}\natexlab{}.
\newblock \showarticletitle{Secure control systems: A quantitative risk
  management approach}.
\newblock \bibinfo{journal}{\emph{IEEE Control Systems Magazine}}
  \bibinfo{volume}{35}, \bibinfo{number}{1} (\bibinfo{year}{2015}),
  \bibinfo{pages}{24--45}.
\newblock


\bibitem[\protect\citeauthoryear{Tsiamis, Gatsis, and Pappas}{Tsiamis
  et~al\mbox{.}}{2017}]%
        {tsiamis2017state}
\bibfield{author}{\bibinfo{person}{Anastasios Tsiamis},
  \bibinfo{person}{Konstantinos Gatsis}, {and} \bibinfo{person}{George~J.
  Pappas}.} \bibinfo{year}{2017}\natexlab{}.
\newblock \showarticletitle{State estimation codes for perfect secrecy}. In
  \bibinfo{booktitle}{\emph{Proceedings of the IEEE Conference on Decision and
  Control}}. \bibinfo{pages}{176--181}.
\newblock


\bibitem[\protect\citeauthoryear{Xu and Zhu}{Xu and Zhu}{2015}]%
        {xu2015secure}
\bibfield{author}{\bibinfo{person}{Zhiheng Xu} {and} \bibinfo{person}{Quanyan
  Zhu}.} \bibinfo{year}{2015}\natexlab{}.
\newblock \showarticletitle{Secure and resilient control design for cloud
  enabled networked control systems}. In \bibinfo{booktitle}{\emph{Proceedings
  of the First ACM Workshop on Cyber-Physical Systems-Security and/or
  PrivaCy}}. \bibinfo{pages}{31--42}.
\newblock


\bibitem[\protect\citeauthoryear{Zhu and Basar}{Zhu and Basar}{2015}]%
        {zhu2015game}
\bibfield{author}{\bibinfo{person}{Quanyan Zhu} {and} \bibinfo{person}{Tamer
  Basar}.} \bibinfo{year}{2015}\natexlab{}.
\newblock \showarticletitle{Game-theoretic methods for robustness, security,
  and resilience of cyberphysical control systems: Games-in-games principle for
  optimal cross-layer resilient control systems}.
\newblock \bibinfo{journal}{\emph{IEEE Control Systems Magazine}}
  \bibinfo{volume}{35}, \bibinfo{number}{1} (\bibinfo{year}{2015}),
  \bibinfo{pages}{46--65}.
\newblock


\end{thebibliography}

\end{document}